\newcommand{\E}{\mathbb{E}}
\newtheorem*{theorem*}{Theorem}
\newtheorem{theorem}{Theorem}[section]
\newtheorem{lemma}[theorem]{Lemma}
\newtheorem{fact}[theorem]{Fact}
\newtheorem{example}[theorem]{Example}
\newtheorem{proposition}[theorem]{Proposition}
\newtheorem{definition}[theorem]{Definition}
\newtheorem{observation}[theorem]{Observation}
\newcommand\yk[1]{}
\newcommand\gs[1]{}
\title{False Consensus, Information Theory, and Prediction Markets\footnote{This is a write up of the theorem and proof which appeared in Kong and Schoenebeck’s EC 2017 tutorial \url{https://drive.google.com/open?id=18QifPXMezN42FgYnsYEgen7tjbxeuoEI}} }
\author{Yuqing Kong\thanks{This work is supported by National Natural Science Foundation of China award number~62002001}\\ Peking University \and Grant Schoenebeck\thanks{This work is supported by NSF CAREER 1452915 and NSF CCF 2007256.}\\ University of Michigan}
\date{}
\begin{document}

\maketitle

\begin{abstract} 
We study a setting where Bayesian agents with a common prior have private information related to an event's outcome and sequentially make public announcements relating to their information. Our main result shows that when agents' private information is independent conditioning on the event's outcome whenever agents have similar beliefs about the outcome, their information is aggregated.  That is, there is no false consensus. 

Our main result has a short proof based on a natural information theoretic framework. A key ingredient of the framework is the equivalence between the sign of the ``interaction information'' and a super/sub-additive property of the value of people's information. This provides an intuitive interpretation and an interesting application of the interaction information, which measures the amount of information shared by three random variables. 

We illustrate the power of this information theoretic framework by reproving two additional results within it: 1) that agents quickly agree when announcing (summaries of) beliefs in round robin fashion [Aaronson 2005]; and 2) results from [Chen et al 2010] on when prediction market agents should  release information to maximize their payment. We also interpret the information theoretic framework and the above results in prediction markets by proving that the expected reward of revealing information is the conditional mutual information of the information revealed.

\end{abstract}

\section{Introduction}
Initially Alice thinks Democrats will win the next presidential election with probability 90\% and Bob thinks Democrats will win with 10\%. The players then alternate announcing their beliefs of the probability the Democrats will win the next presidential election. Alice goes first and declares, ``90\%''. Bob, then updates his belief rationally based on some commonly held information, some private information, and what he can infer from Alice's declaration (e.g. to 30\%) and announces that, ``30\%''.  Alice then updates her belief and announces it, and so forth. 

Formally, we have the following definition. 

\begin{definition} [Agreement Protocol \cite{aaronson2005complexity}] \label{protocol:agreement}
Alice and Bob share a common prior over the three random variables $W$, $X_A$, and $X_B$.  Here $W$ denotes the event to be predicted. Variables $X_A=x_A$ and $X_B=x_B$ are Alice's and Bob's  private information respectively, which can be from a large set and intricately correlated with each other and $W$.   

The agents alternate announcing their beliefs of $W$'s realization.

In round 1, Alice declares her rational belief $\mathbf{p}_A^1 = \Pr[W|X_A=x_A]$.  \footnote{Here and elsewhere we use the notation $\Pr[W]$ to denote a vector whose $w \in W$th coordinate indicates $\Pr[W=w]$. } 
Then Bob updates and declares his belief $\mathbf{p}_B^1 = \Pr[W|X_B=x_B,\mathbf{p}_A^1]$ rationally conditioning on his private information and what he can infer from Alice's declaration. 

Similarly, at round i, Alice announces her updated belief  $$\mathbf{p}_A^i=\Pr[W|X_A=x_A,\mathbf{p}_A^1,\mathbf{p}_B^1, \ldots, \mathbf{p}_A^{i-1},  \mathbf{p}_B^{i-1}];$$ and subsequently, Bob updates and announces his belief $$\mathbf{p}_B^i=\Pr[W|X_B=x_B,\mathbf{p}_A^1,\mathbf{p}_B^1, \ldots, \mathbf{p}_A^{i-1},  \mathbf{p}_B^{i-1}, \mathbf{p}_A^i].$$  

This continues indefinitely.
\end{definition}


Two fundamental questions arise from this scenario:
\begin{enumerate}
    \item  Will Alice and Bob ever agree or at least approximately agree, and if so will they (approximately) agree in a reasonable amount of time?
    \item If they (approximately) agree, will their agreement  (approximately) aggregate their information?  That is, will they (approximately) agree on the posterior belief conditioning on Alice and Bob's private information.  
\end{enumerate}

Aumann \cite{aumann1976agreeing} famously showed that rational Alice and Bob will have the same posterior belief given that they share the same prior and their posteriors are a common knowledge. In particular, if the agents in the agreement protocol ever stop updating their beliefs, they must agree.   While this may seem counter-intuitive, a quick explanation is that it is not rational for Alice and Bob to both persistently believe they know more than the the other person. This result does not fully answer the first question because the common knowledge requires a certain amount of time to be achieved.  

Both Aaronson \cite{aaronson2005complexity} and Geanakoplos and Polemarchakis \cite{geanakoplos1982we} answer the first question in the affirmative. Geanakoplos and Polemarchakis \cite{geanakoplos1982we} show that the agreement protocol will terminate after a finite number of messages.
Aaronson \cite{aaronson2005complexity} shows that without unbounded precision requirement, even when Alice and Bob only exchange a summary of their beliefs, rational Alice and Bob will take at most $O(\frac{1}{\delta \epsilon^2})$ rounds to have $(\epsilon,\delta)$-close beliefs,\footnote{$\Pr[|\text{Alice's expectation}-\text{Bob's expectation}|>
\epsilon]<\delta$ } regardless of how much they disagree with each other initially.

Alas, it is known the second question cannot always be answered in the affirmative, and thus agreement may not fully aggregate information.

\begin{example}[False consensus]
Say Alice and Bob each privately and independently flip a fair coin, and the outcome is the XOR of their results. Alice and Bob immediately agree, both initially proclaiming the probability  $0.5$.  However, this agreement does not aggregate their information; pooling their information, they could determine the outcome. 
\end{example}

Nonetheless, we answer the second question affirmatively for a large class of structures in a generalized context with more than two agents that we call the Round Robin Protocol. Notice that in the above false consensus example, Alice's and Bob's private information are independent but once we condition on the outcome they are dependent. Chen et al. \cite{chen2010gaming}  call this independent structure ``complements'' for reasons that will become clear. They also propose another independent structure, ``substitutes'' where both Alice and Bob's private information are independent conditioning on the outcome. Chen and Waggoner \cite{chen2016informational} further develop these concepts.

We will show that in the ``substitutes'' setting, i.e., when Alice and Bob's information are conditionally independent, (approximate) agreement implies (approximate) aggregation. We prove the results in the $n$ agents setting which is a natural extension of the Alice and Bob case. Our proof is direct and short based on information-theoretic tools. 

\paragraph{High Level Proof}  First, we denote the value of an agent's information as the mutual information between their private information and the outcome conditioning on the public information.  

The main lemma shows that the ``substitute'' structure implies the \emph{sub-additivity} of the values of people's private information.

When people approximately agree with each other conditioning on the history, the remaining marginal value of each individual's private information is small $\leq \epsilon$.  Under the ``substitutes'' structure, the sub-additive property of the main lemma implies the total value of information that has not been aggregated is most $n\epsilon$ where $n$ is the number of agents ($n=2$ in the Alice and Bob case). Therefore, (approximate) agreement implies (approximate) aggregation.

To show the main lemma, a key ingredient is the equivalence between the sign of the interaction information and the super/sub-additive property of the value of people's information. 
Interaction information is a generalized mutual information which measures the amount of information shared by three random variables. Unlike mutual information, interaction information can be positive or negative and thus is difficult to interpret and does not yet have a broad applications. Our framework provides an intuitive interpretation and an interesting application of the interaction information.

We additionally illustrate the power of this information theoretic framework by reproving two additional results within it: 1) that agents quickly agree when announcing beliefs in a round robin fashion \cite{aaronson2005complexity}; and 2) results from  Chen et al. \cite{chen2010gaming} that to maximize their payment in a prediction market, when signals are substitutes, agents should reveal them as soon as possible, and when signals are complements, agents should reveal them as late as possible.  We also interpret our information theoretic framework, our main result, and our quick convergence reproof in the context of prediction markets by proving that the expected reward of revealing information is the conditional mutual information of the information revealed.

The reproof that agents quickly agree uses the aggregated information as a potential function and observes that each round in which their is $\epsilon$ disagreement, the aggregated information must increase by $\epsilon$ (or a function of $\epsilon$ in the setting when agents only announce a summary of their beliefs). The result of when agents should reveal information in a prediction market follows from the sub/super-additivity of mutual information in each of these cases, which can be established using the sign of the interaction information.

\subsection{Related Work}

Protocols for consensus are well-studied in many different contexts with both Bayesian and non-Bayesian agents.  Many of these are in a growing field of social learning~\cite{golub2017learning}.

In some sense, the Bayesian update rule, also studied in this paper, is the most canonical and natural update rule.  The social learning literature concerning Bayesian agents typically asks questions about how Bayesian agents, each endowed with some private information, can aggregate their information using by public declarations.  When agents can only take binary (or a small number of) actions---often conceptualized as which of two products the agents is adopting---which depend on their beliefs, it is often discovered that agents can herd whereupon agents collectively have enough information to take more beneficial actions,  but fail to do so \cite{BHW1992,Banerjee1992-wr,Smith2000-fi}.  Our setting is different, because agents can act more then once.  However, herding is essentially a false consensus concerning a beneficial action.  In our setting, we ask a similar question about when the protocol can get stuck before aggregating the information of the agents. Other models in this literature look at agents embedded on networks~\cite{acemoglu2011opinion,lobel2016preferences,frongillo2011social} , that only generally announce binary information.  We do not consider a network structure as is often done in these works.

As already mentioned, prediction markets have also been analyzed in the context of Bayesian agents \cite{chen2010gaming,KongS2018AliceBobAlic,AnunrojwantCWX2019AliceBobAlice}.  We reprove one of the results of Chen et al. \cite{chen2010gaming}.  Like the result we reprove, these works study the optimal strategies for agents.  Kong and Schoenebeck \cite{KongS2018AliceBobAlic} show that sometimes even one bit of information can behave both like complements and substitutes: the agent would like to release part of it immediately, but part of it last.  

There is also a plethora of work studying non-Bayesian update rules and when consensus occurs or fails to occur---especially in the context of networks.  In this context it is sometimes not clear, whether agents are ``learning" or just trying to arrive at a consensus (e.g.~through imitation).  Typically, we think of agents as learning when there is a ground truth to which they are attempting to converge.  However, because the agents are non-Bayesian, the dynamics typically do not depend on the existence of a ground truth, and this part of the model is often not explicitly specified.

In these models, often the agents have a discrete state \cite{schoenebeck2018consensus,MosselS10,gao2019volatility}.  In such a case, to interpret the state as a belief, one has to rule out the granular beliefs of Bayesian reasoning.  When these models have continuous states, especially when the states are the $[0, 1]$ interval, it is easy to interpret the state as a belief.  However, the updates are based on some heuristic instead of being Bayesian.  For example, in the popular Degroot model~\cite{Degroot74}, agents update their state as a weighted average of their and their neighbors' signals in the previous round, and thus imitate their neighbors as opposed to the more delicate Bayesian reasoning.  In particular,  this update rule implies correlation neglect~\cite{enke2019correlation}---agents do not reason about how the information of their neighbors is linked.  Other models introduce edge weights that update\cite{durrett2012graph,gao2019volatility}, stubborn agents~\cite{yildiz2013binary,gao2017engineering}, and other modifications to more realistically model certain settings both intuitively and empirically~\cite{jackson2010social}.   As models become more attuned to predicting real agents, they also become more ad hoc and less canonical as the proliferation of models illustrates.  Instead, our paper operates in the most canonical model, understanding that this an imperfect model of human behavior, but nonetheless, can shed light on what does happen by aiding our understanding in this idealized model.

\paragraph{Independent Work}
Frongillo et al. \cite{frongillo2021agreement} independently prove that agreement implies aggregation of agents' information  under similar special information structures. However, the analyses are very different. Frongillo et al. \cite{frongillo2021agreement} employ a very delicate analysis that allows the results to be extended to general divergence measures. Our information-theoretic framework's analysis  provides a direct, short, and intuitive proof.

\section{Preliminaries}

\subsection{Complements and Substitutes}
Following~\cite{chen2010gaming} we will be interested in two main types of signals.  

\begin{definition}(Substitutes and Complements~\cite{chen2010gaming})
$W$ denotes the event to be predicted.
\begin{description}
\item [Substitutes] Agents' private information $X_1,X_2,\cdots,X_n$ are independent conditioning on $W$. 
\item [Complements] Agents' private information $X_1,X_2,\cdots,X_n$ are independent.  
\end{description}
\end{definition}

\subsection{Information Theory Background}
This section introduces multiple concepts in information theory that we will use to analyze the consensus protocol and, later, prediction markets. 

\begin{definition}[Entropy \cite{shannon1948mathematical}]
We define the entropy of a random variable $X$ as
$$H(X):=-\sum_x \Pr[X=x]\log(\Pr[X=x]).$$

Moreover, we define the conditional entropy of $X$ conditioning on an additional random variable $Z=z$ as 
$$ H (X|Z=z):=-\sum_x\Pr[X=x|Z=z]\log(\Pr[X=x|Z=z]) $$
We also define the conditional entropy of $X$ conditioning on $Z$ as 
$$ H (X|Z):=\E_{Z}[H (X|Z=z)]. $$
\end{definition}

The entropy measures the amount of uncertainty in a random variable.  A useful fact is that when we condition on an additional random variable, the entropy can only decrease.  This follows immediately from the concavity of log.  

\begin{definition}[Mutual information \cite{shannon1948mathematical}]
We define the mutual information between two random variables $X$ and $Y$ as
$$I (X;Y):=\sum_{x,y}\Pr[X=x,Y=y]\log\left(\frac{\Pr[X=x,Y=y]}{\Pr[X=x]\Pr[Y=y]}\right)$$
Moreover, we define the conditional mutual information between random variables $X$ and $Y$ conditioning on an additional random variable $Z=z$ as
$$I (X;Y|Z=z):=\sum_{x,y}\Pr[X=x,Y=y|Z=z]\log\left(\frac{\Pr[X=x,Y=y|Z=z]}{\Pr[X=x|Z=z]\Pr[Y=y|Z=z]}\right)$$
We also define the conditional mutual information between $X$ and $Y$ conditioning on $Z$ as
$$I (X;Y|Z):=\E_{Z} I (X;Y|Z=z).$$
\end{definition}

\begin{fact}[Facts about mutual information]~\cite{cover2006elements}\label{cor:MI-prop}
\begin{description}
\item[Symmetry:] $I(X;Y) = I(Y;X)$
\item[Relation to entropy:] $H(X)-H(X|Y)=I(X;Y)$, $H(X|Z)-H(X|Y,Z)=I(X;Y|Z)$
\item[Non-negativity:] $I(X;Y)\geq 0$
\item[Chain rule:] $I(X,Y;Z)=I(X;Z)+I(Y;Z|X)$
\item[Monotonicity:] when $X$ and $Z$ are independent conditioning on $Y$, $I(X,Z)\leq I(X;Y)$.
\end{description}
\end{fact}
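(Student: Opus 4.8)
The plan is to obtain every item directly from the definitions of entropy and mutual information, using nothing beyond marginalization of joint distributions and Jensen's inequality; since the last two items build on the earlier ones, I would establish them in the order listed.

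\emph{Symmetry} is immediate, as the defining double sum for $I(X;Y)$ is invariant under exchanging the roles of $X$ and $Y$. For the \emph{relation to entropy}, I would expand $H(X)-H(X|Y)$, use $\Pr[X=x]=\sum_y \Pr[X=x,Y=y]$ to write both terms as sums over pairs $(x,y)$ weighted by $\Pr[X=x,Y=y]$, and combine the logarithms into $\log\bigl(\Pr[X=x,Y=y]/(\Pr[X=x]\Pr[Y=y])\bigr)$, which is exactly $I(X;Y)$. Running the identical computation with every probability conditioned on $Z=z$ gives $H(X|Z=z)-H(X|Y,Z=z)=I(X;Y|Z=z)$, and taking $\E_Z$ of both sides yields the conditional form.

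For \emph{non-negativity}, I would write $-I(X;Y)=\sum_{x,y}\Pr[X=x,Y=y]\log\bigl(\Pr[X=x]\Pr[Y=y]/\Pr[X=x,Y=y]\bigr)$, restrict the sum to pairs of positive probability, and apply Jensen's inequality to the concave $\log$: the right-hand side is at most $\log\sum_{x,y}\Pr[X=x]\Pr[Y=y]\le\log 1=0$; the conditional version follows by applying this for each $Z=z$ and averaging. The \emph{chain rule} follows from the entropy relations: $I(X,Y;Z)=H(Z)-H(Z|X,Y)$, while $I(X;Z)=H(Z)-H(Z|X)$ and $I(Y;Z|X)=H(Z|X)-H(Z|X,Y)$, and the last two telescope to the first. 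Finally, for \emph{monotonicity}, the hypothesis that $X$ and $Z$ are independent conditioning on $Y$ is exactly the statement $I(X;Z|Y)=0$; applying the chain rule to $I(X;Y,Z)$ in the two possible orders gives $I(X;Y)+I(X;Z|Y)=I(X;Y,Z)=I(X;Z)+I(X;Y|Z)$, and discarding the nonnegative term $I(X;Y|Z)$ leaves $I(X;Z)\le I(X;Y)$ (reading the statement's ``$I(X,Z)$'' as $I(X;Z)$).

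There is no genuine obstacle here: each step is a one-line manipulation of finite sums. The only points requiring care are the bookkeeping when regrouping logarithms in the two entropy-based derivations, and keeping straight which quantities are conditioned on the event $Z=z$ versus averaged over $Z$.
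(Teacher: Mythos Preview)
Your proposal is correct and essentially matches the paper's own justification: both derive symmetry and the entropy relation directly from the formula, obtain the chain rule by telescoping $H(Z)-H(Z|X,Y)$, and prove monotonicity by combining $I(X;Z|Y)=0$ with the chain rule expansion of $I(X;Y,Z)$. The only cosmetic difference is non-negativity, where the paper invokes ``conditioning decreases entropy'' (itself a Jensen argument) while you apply Jensen directly to the mutual-information sum; these are the same idea in different packaging.
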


The first two facts follows immediately from the formula.  The third follows from the second, and the fact that conditioning can only decreases entropy.  The first three allow one to understand mutual information as the amount of uncertainly of one random variable that is eliminated once knowing the other (or vice versa).   The chain rule follows from the second because $I(X,Y;Z)= H(Z) - H(Z|X, Y) = H(Z) - H(Z | Y) + H(Z |Y)  - H(Z|X, Y) = I(X;Z)+I(Y;Z|X)$. The last property follows from the fact that $I(X; Y,Z)=I(X;Y)$ which can be proved by algebraic calculations and the chain rule which says $I(X; Y,Z) = I(X; Z) + I(X;Y|Z)$. 

\medskip

The interaction information, sometimes called co-information, is a generalization of the mutual information for three random variables. 

\begin{definition}[Interaction information \cite{ting1962amount}]
For three random variables $X$, $Y$ and $Z$, we define the interaction information among them as 
$$ I(X;Y;Z):=I(X;Y)-I(X;Y|Z) $$
\end{definition}

\begin{fact}[Symmetry~\cite{ting1962amount}]\label{fact:II-properties} 
 $I(X; Y; Z) = I(Y; X; Z)  =  I(X; Z; Y)$
\end{fact}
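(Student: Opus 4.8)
The plan is to reduce the claim to two transpositions and dispatch each with the mutual-information identities already recorded in Fact~\ref{cor:MI-prop}. The equality $I(Y;X;Z) = I(X;Y;Z)$ is immediate: by the symmetry of mutual information, $I(Y;X) = I(X;Y)$ and $I(Y;X|Z) = I(X;Y|Z)$, so the defining expressions $I(Y;X) - I(Y;X|Z)$ and $I(X;Y) - I(X;Y|Z)$ coincide term by term. The real content is therefore $I(X;Y;Z) = I(X;Z;Y)$, i.e.\ $I(X;Y) - I(X;Y|Z) = I(X;Z) - I(X;Z|Y)$.

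For this I would apply the chain rule (together with the symmetry of mutual information) to the joint quantity $I(X;Y,Z)$ in two different ways:
\[
I(X;Y,Z) = I(X;Y) + I(X;Z\mid Y) = I(X;Z) + I(X;Y\mid Z).
\]
Equating the two right-hand sides and rearranging gives exactly $I(X;Y) - I(X;Y|Z) = I(X;Z) - I(X;Z|Y)$, i.e.\ $I(X;Y;Z) = I(X;Z;Y)$. Since the transposition of the first two arguments and the transposition of the last two generate every permutation of $(X,Y,Z)$, combining the two displayed equalities shows that $I(\cdot\,;\cdot\,;\cdot)$ is a symmetric functional of its three arguments, which is more than the Fact asks for.

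I expect no genuine obstacle here; the only point requiring slight care is checking that the chain rule $I(X,Y;Z) = I(X;Z) + I(Y;Z|X)$ from Fact~\ref{cor:MI-prop} may be applied with a joint variable placed in the first slot, which is legitimate because mutual information treats each of its arguments as a single (possibly vector-valued) random variable. As the conceptual reason behind the symmetry, I would also record the closed form obtained by expanding everything into Shannon entropies via $H(A)-H(A|B)=I(A;B)$ and $H(A|C)=H(A,C)-H(C)$, namely
\[
I(X;Y;Z) = H(X)+H(Y)+H(Z) - H(X,Y) - H(X,Z) - H(Y,Z) + H(X,Y,Z),
\]
whose right-hand side is visibly invariant under every permutation of $X$, $Y$, and $Z$. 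This yields an independent derivation of the Fact and makes transparent why the interaction information --- despite the asymmetric look of its definition --- does not depend on how the three variables are ordered.
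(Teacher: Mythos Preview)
Your proof is correct and is precisely the kind of ``algebraic manipulation'' the paper alludes to; the paper does not spell out any details beyond that one-line remark. Your chain-rule argument for the transposition $I(X;Y;Z)=I(X;Z;Y)$ and the symmetric entropy expansion are both standard and valid, and the only subtlety you flagged (applying the chain rule with a joint variable in one slot) is handled correctly via the symmetry of mutual information already recorded in Fact~\ref{cor:MI-prop}.
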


This can be verified through algebraic manipulations.

\paragraph{Venn Diagram}
As shown in figure~\ref{fig:diagram}, random variables $X,Y,Z$ can be visualized as sets $H(X), H(Y), H(Z)$ where the set's area represents the uncertainly of the random variable, and:  
\begin{description}
\item [Mutual Information:]  operation ``;" corresponds to intersection `` $\cap$" and is symmetric;
\item [Joint Distribution:]  operation ``," corresponds to union `` $\cup$" and is symmetric;
\item [Conditioning:]  operation `` $|$" corresponds to difference `` $|$";
\item [Disjoint Union:]  operation `` $+$" corresponds to the disjoint untion  `` $\sqcup$";
\end{description}
For example, $H(X, Y) = H(X)  + H(Y | X)$  because the LHS is  $H(X) \cup H(Y)$. Note that the interaction information corresponds to the center of the Venn diagram in figure~\ref{fig:diagram}.  However, despite the intuition that area is positive, the interaction information is not always positive.

\begin{figure}
\centering
\includegraphics[width=0.45\linewidth]{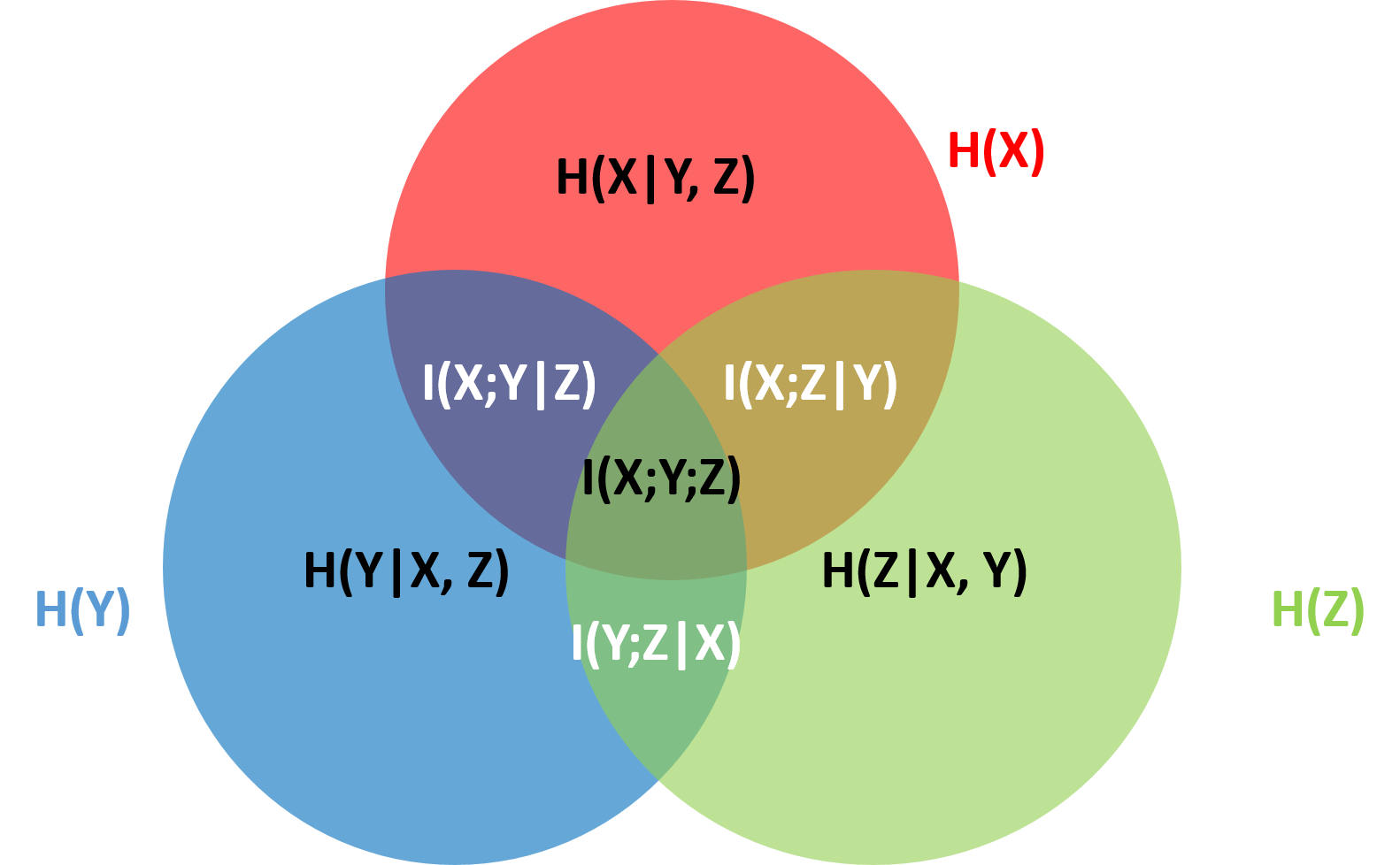}
\caption{Venn diagram}
\label{fig:diagram}
\end{figure}

\section{Information-theoretic Consensus}

We will analyze the Round Robin Protocol,  a multi-agent generalization of the Agreement Protocol, and use the information-theoretic framework to show that 1) consensus is quickly achieved; and 2)  the sub-additivity of ``substitutes'' guarantees that approximate consensus implies approximate aggregation. We first introduce the general communication protocol where agents make declarations in a round robin fashion.

\begin{definition} [Round Robin Protocol]  \label{protocol:consensus}
Agents  $A = \{1 \ldots, n\}$ share a common prior over the $n + 1$ random variables $W$, $X_1$, $X_2$, ..., $X_n$. Let $W$ denote the event to be predicted, and for $i \in A$, let $X_i=x_i$ denote agent $i$'s private message and its realization. 

In each round, the agents take turns sequentially announcing additional information.  In round $t$, agent 1 announces $h_1^t$ and then in sequence, for $i \in \{2, \ldots, n\}$,  agent $i$ announces $h_i^t$.  The first round is round 1 and then rounds proceed indefinitely. 

Let $H_i^t$ denote the history of declarations before agent $i$ announces $h_i^t$ in round $t$.  Thus $H_1^t = h_1^1,h_2^1, \ldots, h_{n}^{t-1}$ and for $i = 2$ to $n$, $H_i^t = h_1^1,h_2^1, \ldots, h_{i-1}^{t}$.  Also, it will be convenient to let $H^t = h_1^1,h_2^1, \ldots, h_n^1, \ldots, h_1^t,h_2^t, \ldots, h_n^t$ denote the history of the first $t$ rounds, and to interpret $H_{n + 1}^t$  as $H_{1}^{t+1}$. 
Note that $H^t = H_1^{t+1} = H_{n+1}^{t}$. 

Let $\mathbf{p}_i^t = \Pr[W|X_i=x_i, H_i^t]$ denote the belief of agent $i$ as she announces $h_i^t$ in round $t$.  Let $\mathbf{q}_i^t = \Pr[W|H_i^t]$ denote the belief of an outside observer (who knows the common prior but does not have any private message)  as agent $i$ announces $h_i^t$ in round $t$.

It is required that $h_i^t$ be well defined based on $X_i$ and $H_i^t$, that is, it should be defined on the filtration of information released up to that time.

\end{definition}

We state an information-theoretic definition for approximate consensus. With this definition the analysis of consensus time and false consensus becomes intuitive.

\begin{definition}[$\epsilon$-MI consensus] Round $t$ achieves $\epsilon$-MI consensus if for all $i$, $$I(X_i;W|H^t)\leq \epsilon.$$\end{definition}

We define the amount of information aggregated regarding $W$ at any time as the mutual information between $W$ and the historical declarations, $I(H;W)$. The following lemma shows two intuitive properties: 1) the amount of information aggregated is non-decreasing and 2) the growth rate depends on the marginal value of the agent's declaration. 

\begin{restatable}{lemma}{lemincreasing}\label{lem:increasing} \emph{(Information-theoretic Properties of the Protocol).} In the Round Robin Protocol, 
\begin{description}
\item [Non-decreasing Historical Information] any agent's declaration does not decrease the amount of information so $$I(H_{i+1}^t;W)\geq I(H_i^t;W),~\text{for all}~  i \in A, t \in \mathbb{N}.$$
Therefore, the information does not decrease after each round: $$I(H^{t+1};W)\geq I(H^t;W),~\text{for all}~ t \in \mathbb{N}.$$
\item [Growth Rate = Marginal Value] the change in the historical information is the conditional mutual information between the acting agent's declaration and the predicted event conditioning on the history, i.e., $$I(H_{i+1}^t;W)- I(H_i^t;W)=I(h_i^t;W|H_i^t),~ \text{for all}~ i \in A, t \in \mathbb{N}.$$ 
\end{description}
\end{restatable}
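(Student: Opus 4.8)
The plan is to observe that the history right after agent $i$ speaks in round $t$ is exactly the history right before she speaks, augmented by her declaration: as random variables on the common probability space, $H_{i+1}^t = (H_i^t, h_i^t)$. This is immediate from the definitions in Protocol~\ref{protocol:consensus}, where $H_i^t = h_1^1, h_2^1, \ldots, h_{i-1}^{t}$ and $H_{i+1}^t = h_1^1, h_2^1, \ldots, h_{i}^{t}$, together with the stated conventions $H_{n+1}^t = H^t = H_1^{t+1}$ that handle the wrap-around at the end of a round. The protocol's requirement that $h_i^t$ be well defined as a function of $X_i$ and $H_i^t$ ensures $h_i^t$ is a genuine random variable, so all the mutual-information quantities below make sense; beyond that, the filtration condition is not needed for the identities themselves.

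Given this observation, the \textbf{Growth Rate = Marginal Value} equation is just the chain rule for mutual information. Applying the chain rule from Fact~\ref{cor:MI-prop}, namely $I(U,V;Z) = I(U;Z) + I(V;Z\mid U)$, with $U = H_i^t$, $V = h_i^t$, and $Z = W$, yields
$$I(H_{i+1}^t; W) = I(H_i^t, h_i^t; W) = I(H_i^t; W) + I(h_i^t; W \mid H_i^t),$$
which rearranges to $I(H_{i+1}^t;W) - I(H_i^t;W) = I(h_i^t;W\mid H_i^t)$ for all $i \in A$ and $t \in \mathbb{N}$.

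The \textbf{Non-decreasing Historical Information} part is then an immediate corollary: by the non-negativity of conditional mutual information (Fact~\ref{cor:MI-prop}), $I(h_i^t; W \mid H_i^t) \geq 0$, so $I(H_{i+1}^t;W) \geq I(H_i^t;W)$. Chaining this inequality through a full round, $i = 1, 2, \ldots, n$, and invoking $H_1^{t+1} = H_{n+1}^t$, gives $I(H^{t+1};W) \geq I(H^t;W)$ for every $t \in \mathbb{N}$.

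I do not expect a genuine obstacle here: once the history notation is correctly unwound so that $H_{i+1}^t$ is literally the pair $(H_i^t, h_i^t)$, both bullet points are one-line consequences of the chain rule and non-negativity of mutual information. The only points that warrant a moment of care are the round-robin boundary cases — verifying that $H_{n+1}^t$ coincides with both $H^t$ and $H_1^{t+1}$ — which the conventions in Protocol~\ref{protocol:consensus} already pin down.
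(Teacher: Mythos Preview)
Your proposal is correct and follows essentially the same approach as the paper: both unwind $H_{i+1}^t = (H_i^t, h_i^t)$, apply the chain rule to obtain the Growth Rate identity, and then read off the Non-decreasing property from non-negativity of conditional mutual information. If anything, your write-up is slightly more explicit than the paper's about the wrap-around conventions at the end of a round.
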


These two properties directly follow from the properties of mutual information.  We defer the detailed proof to Appendix~\ref{sec:add}.

\subsection{Two Consensus Protocols}

We first introduce the natural Standard Consensus Protocol, which is the Agreement Protocol when there are two agents. We then introduce a discretized version of the protocol. 

\begin{definition} [Standard Consensus Protocol]  \label{protocol:standard}
The Standard Consensus Protocol is just the Round Robin Protocol where $h_i^t = \mathbf{p}_i^t$.  That is, each agent announces her belief.  
\end{definition}

\begin{figure}
\centering
\includegraphics[width=0.9\linewidth]{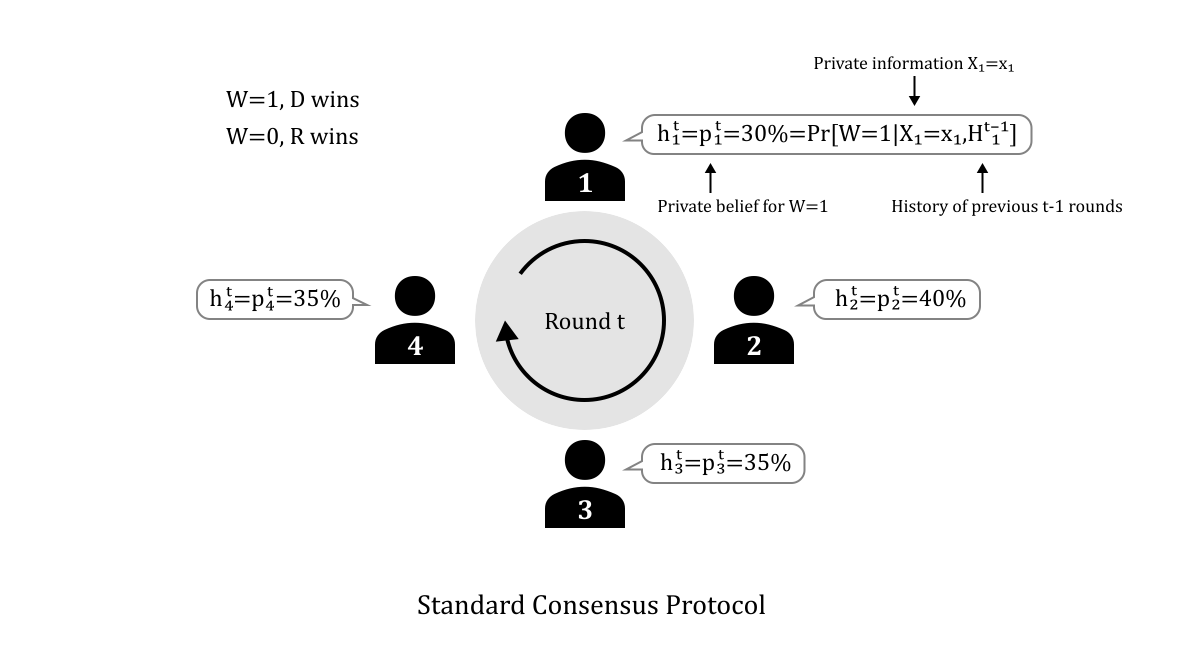}
\caption{Standard Consensus Protocol}
\label{fig:sp}
\end{figure}

In the above protocol, agents announce their beliefs. However, the number of bits required for belief announcement may be unbounded. Here we follow Aaronson \cite{aaronson2005complexity} and both focus on predicting binary events (occurs: 1; does no occur: 0) and discretize the protocol as follows: each agent  announces a summary of her current belief by comparing it to the belief of a hypothetical outsider who shares the same prior as the agents, sees all announcements, but does not possess any private information.  The agent announces  ``high'', if her belief that the probability the event will happen is far above the outsider's current belief; ``low'', if her belief is far below the outsider's current belief; and ``medium'' otherwise.   Note that all the agents have enough information to compute the outsider belief.

\begin{definition}[ $\epsilon$-Discretized Consensus Protocol]
Fix $\epsilon>0$.  Let $D_{KL}(p,q)= p\log\frac{p}{q}+(1-p)\log\frac{1-p}{1-q}$ be the KL divergence between a Bernoulli distribution $(1-p,p)$ and a  Bernoulli distribution $(1-q,q)$.  We use $p_i^t$ to denote agent i's belief for $W=1$ as she announces at round $t$ and $q_i^t$ to denote the hypothetical outsider's belief for $W=1$ at that time. We define $\overline{q_i^{t}}>q_i^t$ so that $D_{KL}(\overline{q_i^{t}},q_i^t)=\frac{\epsilon}{4}$, and $\underline{q_i^{t}}<q_i^t$ so that $D_{KL}(\underline{q_i^{t}},q_i^t)=\frac{\epsilon}{4}$.  Then agent i will announce a summary of her belief $$h_i^t=\begin{cases}\text{high}& p_i^t> \overline{q_i^{t}}\\\text{low}&p_i^t< \underline{q_i^{t}}\\\text{medium}&\text{otherwise.}\end{cases}$$ Note that each agent's output can be mapped to $\mathbf{R}$ by mapping high, medium, and low to 1, 0, and -1 respectively.  
\end{definition}

\begin{figure}
\centering
\includegraphics[width=0.9\linewidth]{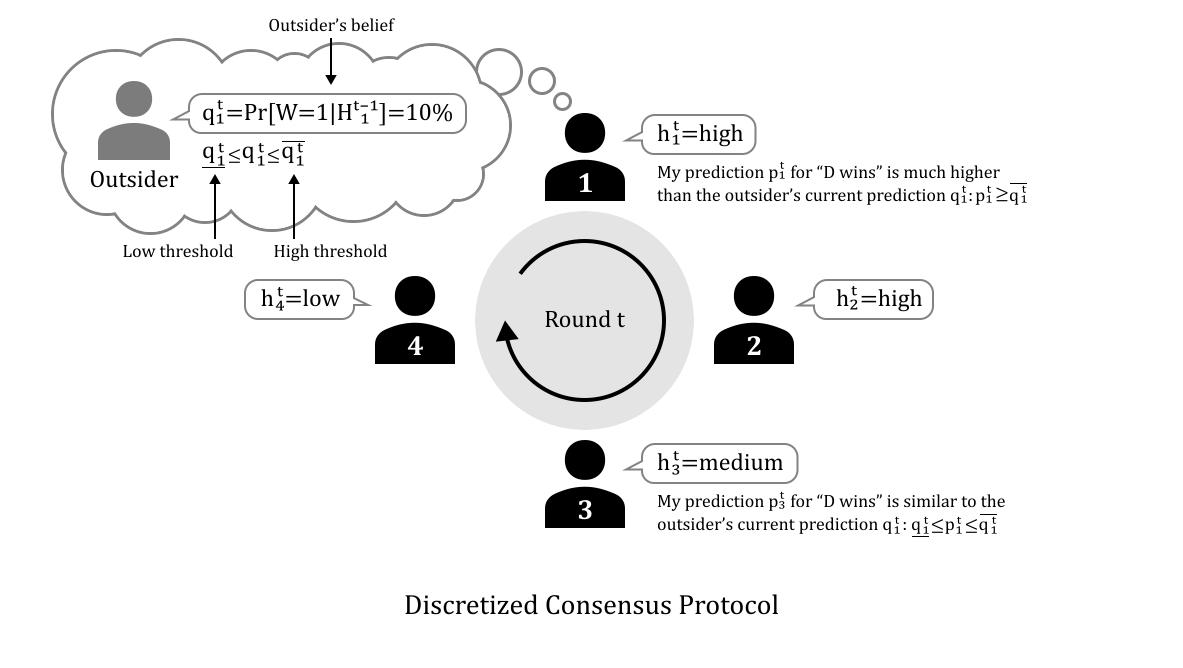}
\caption{Discretized Consensus Protocol}
\label{fig:dp}
\end{figure}

This definition has the same conceptual idea as Aaronson \cite{aaronson2005complexity}, while the $\underline{q_i^{t}}$ and $\overline{q_i^{t}}$ are carefully designed so that informative private messages lead to informative declarations. We will state this property formally in the next section.

\subsection{Quick Consensus}\label{sec:quick-consensus}

In this section we show that both the Standard Consensus Protocol and the Discretized Consensus Protocol quickly reach $\epsilon$-MI consensus.  This result will not hold for general Round-Robin protocols.  For example, agents might not announce any useful information.\footnote{The reader has likely experienced a meeting like this.}

Lemma~\ref{lem:increasing} shows that the amount of aggregated information increases and the growth rate is the marginal value of the agent's declaration. Disagreement implies a $\geq \epsilon$ marginal value of the agent's private information.  We will state Lemma~\ref{lem:substantiallyincreasing} which shows that, in a case like this, where some agent has valuable private information, in the two defined protocols, this agent makes an informative declaration that will substantially increase the amount of aggregated information. This almost immediately leads to the quick consensus result because the total amount of aggregated information is bounded.


\begin{restatable}{lemma}{lemsubstantiallyincreasing}\label{lem:substantiallyincreasing} \emph{(Informative Declaration).} For all $i$ and $t$ and all possible histories $H_i^t$, when agent $i$'s private information is $X_i$ and $I(X_i;W|H_i^t)\geq \epsilon$, in the
 
 \begin{description}
 \item [Standard Consensus Protocol] we have $$I(h^t_i;W|H_i^t)=I(X_i;W|H_i^t)\geq \epsilon;$$
 \item [Discretized Consensus Protocol] we have $$I(h^t_i;W|H_i^t)\geq \frac{1}{64}\epsilon^3\frac{1}{\log\frac{1}{E^{-1}(\frac{\epsilon}{2})}} $$
where $E^{-1}(\epsilon)\leq 0.5$ and is the solution of $x\log x+(1-x)\log (1-x)=- \epsilon$.
 \end{description}
 
\end{restatable}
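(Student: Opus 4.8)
The plan is to treat the two protocols separately, with almost all of the work in the discretized case. For the \textbf{Standard Consensus Protocol} the declaration is $h_i^t=\mathbf{p}_i^t=\Pr[W\mid X_i=x_i,H_i^t]$, so conditioned on the realized history $h_i^t$ is a deterministic function of $X_i$ and is precisely the posterior of $W$ given $(X_i,H_i^t)$. Averaging over the realizations of $X_i$ that produce a fixed value of $h_i^t$ therefore does not change the posterior of $W$, whence $H(W\mid h_i^t,H_i^t)=H(W\mid X_i,H_i^t)$ and so $I(h_i^t;W\mid H_i^t)=I(X_i;W\mid H_i^t)\ge\epsilon$; here the $\le$ direction is the Monotonicity property of Fact~\ref{cor:MI-prop} and the $\ge$ direction is the ``posterior is a sufficient statistic'' identity. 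That is the entire argument for the first item.

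For the \textbf{Discretized Consensus Protocol}, fix $i$, $t$ and a realized history $H_i^t$, and write $q:=q_i^t=\Pr[W=1\mid H_i^t]$ and $P:=\Pr[W=1\mid X_i,H_i^t]$, a function of $X_i$ with $\E[P\mid H_i^t]=q$. Since $W$ is binary I would use the two identities $I(X_i;W\mid H_i^t)=\E[\,D_{KL}(P\,\|\,q)\mid H_i^t\,]$ and $I(h_i^t;W\mid H_i^t)=\sum_{s}\Pr[h_i^t=s\mid H_i^t]\,D_{KL}(\Pr[W=1\mid h_i^t=s,H_i^t]\,\|\,q)$, the sum over $s\in\{\mathrm{hi},\mathrm{med},\mathrm{lo}\}$. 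First, $I(X_i;W\mid H_i^t)\le H(W\mid H_i^t)=E(q)$, so the hypothesis forces $E(q)\ge\epsilon$, hence $q$ and $1-q$ are at least $E^{-1}(\epsilon)\ge E^{-1}(\epsilon/2)$; in particular $D_{KL}(p\,\|\,q)\le M:=\max\{\log\tfrac1q,\log\tfrac1{1-q}\}\le\log\tfrac{1}{E^{-1}(\epsilon/2)}$ for every $p$, and (modulo the caveat below) the thresholds $\overline{q},\underline{q}$ are well defined.

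The heart of the proof is two complementary threshold estimates. \emph{(i) The medium bucket is cheap:} $D_{KL}(\,\cdot\,\|\,q)$ is convex and equals $\tfrac{\epsilon}{4}$ at both $\underline{q}$ and $\overline{q}$, so on the event $h_i^t=\mathrm{med}$, i.e. $\underline{q}\le P\le\overline{q}$, we have $D_{KL}(P\,\|\,q)\le\tfrac{\epsilon}{4}$; splitting $\E[D_{KL}(P\,\|\,q)\mid H_i^t]\ge\epsilon$ according to whether $h_i^t=\mathrm{med}$ and bounding the non-medium part by $M$ gives $\beta:=\Pr[h_i^t\ne\mathrm{med}\mid H_i^t]\ge\tfrac{3\epsilon/4}{M}\ge\tfrac{3\epsilon}{4\log(1/E^{-1}(\epsilon/2))}$. \emph{(ii) Each tail bucket is worth $\tfrac{\epsilon}{4}$:} on $h_i^t=\mathrm{hi}$, i.e. $P>\overline{q}$, the tower rule gives $\Pr[W=1\mid h_i^t=\mathrm{hi},H_i^t]=\E[P\mid P>\overline{q},H_i^t]\ge\overline{q}>q$, and since $D_{KL}(\,\cdot\,\|\,q)$ is increasing on $[q,1]$ this forces $D_{KL}(\Pr[W=1\mid h_i^t=\mathrm{hi},H_i^t]\,\|\,q)\ge D_{KL}(\overline{q}\,\|\,q)=\tfrac{\epsilon}{4}$, and symmetrically for $\mathrm{lo}$. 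Feeding (i) and (ii) into the expansion of $I(h_i^t;W\mid H_i^t)$ yields $I(h_i^t;W\mid H_i^t)\ge\beta\cdot\tfrac{\epsilon}{4}\ge\tfrac{3\epsilon^2}{16\log(1/E^{-1}(\epsilon/2))}\ge\tfrac{1}{64}\epsilon^3\tfrac{1}{\log(1/E^{-1}(\epsilon/2))}$, the last inequality because $\epsilon<1$.

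The step I expect to require the most care is the threshold bookkeeping in the discretized case. One must confirm that $\overline{q}$ and $\underline{q}$ (the solutions of $D_{KL}(\cdot\,\|\,q)=\epsilon/4$) exist, and handle the regime where $q$ is so extreme that $D_{KL}(1\,\|\,q)=\log\tfrac1q<\tfrac{\epsilon}{4}$ (or the symmetric condition), so that no interior high- (resp. low-) threshold exists; there one takes $\overline{q}=1$ (resp. $\underline{q}=0$), which only enlarges the medium region and leaves (i) and (ii), hence the conclusion, intact. The only other delicate point is estimate (ii): the identity $\Pr[W=1\mid h_i^t=\mathrm{hi},H_i^t]=\E[P\mid P>\overline{q},H_i^t]$ uses that $P=\Pr[W=1\mid X_i,H_i^t]$ and that $h_i^t$ is a deterministic function of $(X_i,H_i^t)$, so that conditioning on $\{h_i^t=\mathrm{hi}\}$ coincides with conditioning on $\{P>\overline{q}\}$.
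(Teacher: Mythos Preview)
Your per-history argument is correct and tracks the paper's closely: both write the mutual information as an expected KL divergence, bound the medium bucket's contribution by $\epsilon/4$, lower-bound each tail bucket's posterior KL by $\epsilon/4$ via monotonicity of $D_{KL}(\cdot\,\|\,q)$ on $[q,1]$, and cap every $D_{KL}$ term by $\log\tfrac{1}{E^{-1}(\epsilon/2)}$ through an entropy bound on $q$. The one substantive divergence is the reading of the hypothesis. You fix a realized history and take $I(X_i;W\mid H_i^t)\ge\epsilon$ as a pointwise condition; the paper treats it as the \emph{averaged} quantity $\E_h\,I(X_i;W\mid H_i^t=h)\ge\epsilon$ (which is also how the lemma is invoked in Theorem~\ref{thm:quick}) and therefore opens with a Markov step showing $\Pr_h[I(X_i;W\mid H_i^t=h)\ge\epsilon/2]\ge\epsilon/2$, then runs the per-history analysis at level $\epsilon/2$ on those histories and multiplies by the probability $\epsilon/2$ at the end. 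That reduction is the entire source of the third factor of $\epsilon$ in the stated bound. Under your pointwise reading your proof is complete and in fact sharper---by keeping both tail buckets rather than WLOG-ing to one and retaining the full $3\epsilon/4$ surplus, you obtain $\tfrac{3\epsilon^2}{16}\big/\log\tfrac{1}{E^{-1}(\epsilon/2)}$, genuinely quadratic in $\epsilon$, which you then weaken to match the statement. Under the averaged reading you would need to prepend the paper's Markov step before your fixed-history analysis.
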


The above result in the Discretized Consensus Protocol requires delicate analysis (found in Appendix~\ref{sec:add}) and is mainly based on the fact that the mutual information is the expected KL divergence. 

\begin{theorem}[Convergence rate]\label{thm:quick}  The Standard Consensus Protocol achieves $\epsilon$-MI consensus in at most $\frac{2}{\epsilon}$ rounds. The Discretized Consensus Protocol achieves $\epsilon$-MI consensus in at most 
$\frac{512}{\epsilon^3}\log\frac{1}{E^{-1}(\frac14\epsilon)}$ 
rounds. 
\end{theorem}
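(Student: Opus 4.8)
The plan is a potential-function argument with the aggregated information $\Phi^t:=I(H^t;W)$ as the potential. By Lemma~\ref{lem:increasing}, $\Phi^t$ is non-negative, non-decreasing in $t$, bounded by $H(W)\le\log 2$ (the predicted event is binary), and its increment $\Phi^{t+1}-\Phi^t$ equals the sum $\sum_i I(h_i^{t+1};W\mid H_i^{t+1})\ge 0$ of the marginal values of the messages sent in round $t+1$. It therefore suffices to show that every round $t$ failing to achieve $\epsilon$-MI consensus forces $\Phi^{t+1}-\Phi^t\ge c(\epsilon)$ for a fixed quantity $c(\epsilon)$; summing then bounds the number of non-consensus rounds by $\log 2/c(\epsilon)$, which after substituting $c(\epsilon)$ and using $\log 2\le 1$ yields $\le 2/\epsilon$ rounds for the Standard protocol and $\le\tfrac{512}{\epsilon^3}\log\tfrac1{E^{-1}(\epsilon/4)}$ for the Discretized one.

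The core of the proof is a ``transfer'' estimate. Suppose round $t$ does not achieve consensus, so some agent $i$ has $I(X_i;W\mid H^t)>\epsilon$. Using $H^t=H_1^{t+1}$, I would telescope agent $i$'s marginal value along the histories $H_1^{t+1}\subseteq\cdots\subseteq H_i^{t+1}$ seen during round $t+1$, each step appending one message $h_j^{t+1}$ with $j<i$; by the symmetry of interaction information (Fact~\ref{fact:II-properties}) and non-negativity of conditional mutual information,
\begin{align*}
I(X_i;W\mid H_j^{t+1})-I(X_i;W\mid H_{j+1}^{t+1})&=I(h_j^{t+1};W\mid H_j^{t+1})-I(h_j^{t+1};W\mid H_j^{t+1},X_i)\\
&\le I(h_j^{t+1};W\mid H_j^{t+1}).
\end{align*}
Summing over $j<i$ gives $I(X_i;W\mid H^t)-I(X_i;W\mid H_i^{t+1})\le\sum_{j<i}I(h_j^{t+1};W\mid H_j^{t+1})\le\Phi^{t+1}-\Phi^t$: whatever of agent $i$'s private value is pre-empted by the agents speaking ahead of her in round $t+1$ has already been paid into $\Phi$. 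Now split on whether $I(X_i;W\mid H_i^{t+1})\ge\epsilon/2$. If so, agent $i$ still has value $\ge\epsilon/2$ when she speaks, and Lemma~\ref{lem:substantiallyincreasing} (invoked at threshold $\epsilon/2$) gives $\Phi^{t+1}-\Phi^t\ge I(h_i^{t+1};W\mid H_i^{t+1})\ge\tfrac1{512}\epsilon^3/\log\tfrac1{E^{-1}(\epsilon/4)}$; if not, the inequality above gives $\Phi^{t+1}-\Phi^t>\epsilon-\epsilon/2=\epsilon/2$. In either case $\Phi^{t+1}-\Phi^t\ge c(\epsilon):=\tfrac1{512}\epsilon^3/\log\tfrac1{E^{-1}(\epsilon/4)}$, which is below $\epsilon/2$ in the relevant range. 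For the Standard protocol the argument is cleaner and gives the sharper $c(\epsilon)=\epsilon$: there each message is worth exactly the speaker's full marginal value, $I(h_j^{t};W\mid H_j^{t})=I(X_j;W\mid H_j^{t})$ (a posterior is a sufficient statistic), so the transfer estimate directly yields $\Phi^{t+1}-\Phi^t\ge\sum_{j\le i}I(X_j;W\mid H_j^{t+1})\ge I(X_i;W\mid H^t)>\epsilon$.

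The routine ingredients are the entropy ceiling $\Phi^t\le H(W)$, the telescoping identity of Lemma~\ref{lem:increasing}, and the closing constant arithmetic. The step that needs the real idea — and the main obstacle — is the transfer estimate: one cannot simply assume the disagreeing agent still holds $\ge\epsilon$ (or even $\ge\epsilon/2$) worth of information by the time her turn comes, because earlier speakers in the same round may have extracted part of it; the symmetry and sign of the interaction information are precisely what guarantee that any such loss reappears as growth of $\Phi$, so no progress is wasted. A secondary technical point is invoking Lemma~\ref{lem:substantiallyincreasing} at the reduced threshold $\epsilon/2$ to cover the case where the agent's value has dipped slightly but stays above $\epsilon/2$; this rescaling turns the $E^{-1}(\epsilon/2)$ of that lemma into the $E^{-1}(\epsilon/4)$ of the theorem and inflates its constant by a factor $2^3$.
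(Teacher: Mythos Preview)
Your argument is correct and essentially identical to the paper's: both use $I(H^t;W)$ as a bounded non-decreasing potential, derive that $I(H_i^{t+1};W\mid H^t)+I(X_i;W\mid H_i^{t+1})>\epsilon$ whenever round $t$ fails consensus (the paper via one chain-rule/monotonicity step, you via the equivalent interaction-information telescoping), then split on whether $I(X_i;W\mid H_i^{t+1})\ge\epsilon/2$ and invoke Lemma~\ref{lem:substantiallyincreasing} at threshold $\epsilon/2$ in the affirmative case. The only cosmetic difference is that your Standard-protocol treatment avoids the case split and obtains $c(\epsilon)=\epsilon$ directly, a harmless sharpening of the paper's $\epsilon/2$.
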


\begin{proof}[Proof of Theorem~\ref{thm:quick}]

Because mutual information is monotone, the amount of information aggregated $I(H^t;W)$ is a non-decreasing function with respect to $t$. Moreover, we will show that when a round does not achieve $\epsilon$-MI consensus, $I(H^t;W)$ will have a non-trivial growth rate.  

Formally, when a round $t$ does not achieve $\epsilon$-MI consensus, there exists an agent $i$ such that $I(X_i;W|H^t)> \epsilon$. At round $t+1$, the expected amount of aggregated information increases at least the marginal value of the historical declarations $H^{t+1}_{i}$ before agent $i$ announces at round $t+1$, plus the marginal value of agent $i$'s declarations $h^{t+1}_{i}$.  This intuitively follows from monotonicity and the chain rule, and can be derived formally as follows:  \begin{align*}
I(H^{t+1};W)-I(H^t;W)\geq & I(H^{t+1}_{i+1};W)-I(H^t;W) \tag{$H^{t+1}$ contains $H^{t+1}_{i+1}$}\\ 
= & I(h^{t+1}_{i},H^{t+1}_{i};W)-I(H^t;W) \tag{$H^{t+1}_{i+1}=(H^{t+1}_{i},h^{t+1}_{i})$}\\
= & I(h^{t+1}_{i},H^{t+1}_{i};W|H^t)\tag{Based on Chain Rule and the fact that $H^{t+1}_{i}$ contains $H^t$}\\
= & I(H^{t+1}_{i};W|H^t)+I(h^{t+1}_{i};W|H^{t+1}_{i}) \tag{Again, Based on Chain Rule and the fact that $H^{t+1}_{i}$ contains $H^t$}
\end{align*} 
Thus, to analyze the growth rate, we need to show that when agent $i$'s private information is informative, $I(X_i;W|H^t)> \epsilon$, either agent $i$'s declaration $h^{t+1}_{i}$ is informative, or the historical declarations $H^{t+1}_{i}$ before agent $i$ announces at round $t+1$ is informative, conditioning the historical declarations $H^t$ before round $t+1$. 

When $I(X_i;W|H^t)> \epsilon$, we have $I(X_i,H^{t+1}_{i};W|H^t)> \epsilon$ as well. Moreover, because $$I(X_i,H^{t+1}_{i};W|H^t)=I(H^{t+1}_{i};W|H^t)+I(X_i;W|H^{t+1}_{i})$$ (Chain Rule), either 1) $I(H^{t+1}_{i};W|H^t)>\frac{\epsilon}{2}$ or 2) $I(X_i;W|H^{t+1}_{i})>\frac{\epsilon}{2}$. 

In case 1), the historical declarations $H^{t+1}_{i}$ is informative conditioning on $H^t$ and we have the growth rate $I(H^{t+1};W)-I(H^t;W)\geq I(H^t_{i};W|H^t)>\frac{\epsilon}{2}$. 

In case 2), for the Standard Consensus Protocol Lemma~\ref{lem:substantiallyincreasing} shows that  $I(h^{t+1}_{i};W|H^{t+1}_{i})=I(X_i;W|H^{t+1}_{i})>\frac{\epsilon}{2}$ because agent $i$ declares her Bayesian posterior as $h^{t+1}_{i}$. This again guarantees a $>\frac{\epsilon}{2}$ growth rate.  For the Discretized Consensus Protocol, Lemma~\ref{lem:substantiallyincreasing} shows that   $I(\psi(X_i);W|H^{t+1}_{i})\geq \frac{1}{512}\epsilon^3\frac{1}{\log\frac{1}{E^{-1}(\frac{\epsilon}{4})}}$.

Finally, for all $t$, $I(H^t;W)\leq I(X_1,X_2,...,X_n;W)\leq H(W)\leq \log_2 2 = 1$. Therefore, with at most $\frac{1}{\epsilon}$ rounds, the Standard Consensus Protocol achieves $\epsilon$-MI consensus, and with at most $\frac{512}{\epsilon^3}\log\frac{1}{E^{-1}(\frac14\epsilon)}$ 
rounds, the Discretized Consensus Protocol achieves $\epsilon$-MI consensus.  
\end{proof}

\subsection{No False Consensus with Substitutes} \label{sec:false-consensus}

We present our main result in this section: when agents' information is substitutes, there is no false consensus. A key ingredient is a subadditivity property for substitutes.  This result applies to general Round Robin protocols, and, in particular, is not restricted to the two specified consensus protocols.

In the ideal case, if we are given all agents' private information explicitly, the amount of information we obtain regarding $W$ is $I(X_1,X_2,\cdots,X_n;W)$. When agents follow the protocol for $t$ rounds, the amount of information we obtain regarding $W$ is $I(H^t;W)$. We care about the ``loss'', $I(X_1,X_2,\cdots,X_n;W)-I(H^t;W)$.

In the Standard Consensus Protocol under substitutes structure complete agreement is obtained after one round. The reason is that in such a situation, the information of each agent can be fully integrated after previous agents all precisely report their beliefs (see Appendix~\ref{sec:completeagree}). The following theorem shows a much more general results: under the substitutes structure \emph{every} protocol has the no false consensus property.

\begin{theorem}[Convergence $\Rightarrow$ Aggregation]\label{thm:agree}
    For all priors where agents' private information is substitutes, if agents achieve $\epsilon$-MI consensus after $t$ rounds, then the amount of information that has not been aggregated $I(X_1,X_2,\cdots,X_n;W)-I(H^t;W)$ is bounded by $n\epsilon$. 
\end{theorem}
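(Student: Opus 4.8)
The plan is to observe that the ``un-aggregated information'' is literally a conditional mutual information, and then to control it with a subadditivity property of the value of information that the substitutes structure supplies. \emph{Step 1 (the loss equals $I(X_1,\dots,X_n;W\mid H^t)$).} Because the Round Robin Protocol requires each $h_i^t$ to be defined on the information released so far, an easy induction shows $H^t$ is a deterministic function of $(X_1,\dots,X_n)$; hence $H(H^t\mid X_1,\dots,X_n)=0$, so $I(H^t;W\mid X_1,\dots,X_n)=0$. Applying the chain rule (Fact~\ref{cor:MI-prop}) in two ways,
\[
I(X_1,\dots,X_n;W)=I(X_1,\dots,X_n,H^t;W)=I(H^t;W)+I(X_1,\dots,X_n;W\mid H^t),
\]
so $I(X_1,\dots,X_n;W)-I(H^t;W)=I(X_1,\dots,X_n;W\mid H^t)$, and it suffices to bound this last quantity by $n\epsilon$.

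\emph{Step 2 (subadditivity under substitutes).} The core is the main lemma: when the signals are substitutes, for any conditioning variable $H$,
\[
I(X_1,\dots,X_n;W\mid H)\ \le\ \sum_{i=1}^{n} I(X_i;W\mid H).
\]
Writing $X_{<i}=(X_1,\dots,X_{i-1})$, the chain rule rewrites the left side as $\sum_i I(X_i;W\mid X_{<i},H)$, so it is enough to show $I(X_i;W\mid X_{<i},H)\le I(X_i;W\mid H)$ for each $i$. Their difference is the conditional interaction information $I(X_i;W;X_{<i}\mid H)$, which by the conditional version of the symmetry fact (Fact~\ref{fact:II-properties}) equals $I(X_i;X_{<i}\mid H)-I(X_i;X_{<i}\mid W,H)$; so the lemma reduces to
\[
I(X_i;X_{<i}\mid W,H)\ \le\ I(X_i;X_{<i}\mid H),
\]
which is exactly where the substitutes hypothesis $I(X_i;X_{<i}\mid W)=0$ enters: conditioning on $W$ should only strip away the common-cause dependence between $X_i$ and $X_{<i}$, never add dependence.

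\emph{Step 3 (conclude), and the obstacle.} Given the lemma, the theorem is immediate: apply it with $H=H^t$, and since round $t$ achieves $\epsilon$-MI consensus we have $I(X_i;W\mid H^t)\le\epsilon$ for every $i$, hence $I(X_1,\dots,X_n;W\mid H^t)\le n\epsilon$, which by Step~1 is the claimed bound. The main obstacle is entirely within Step~2 — the inequality $I(X_i;X_{<i}\mid W,H)\le I(X_i;X_{<i}\mid H)$, i.e.\ nonnegativity of the conditional interaction information $I(X_i;W;X_{<i}\mid H)$. The subtlety is that when the lemma is applied with $H=H^t$, the conditioning variable is itself a function of all the private signals, so conditioning on it can make the $X_i$ dependent even after $W$ is known; one must show that the dependence between $X_i$ and $X_{<i}$ manufactured this way never exceeds the dependence already present given $H$ alone. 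I expect this to fall out by expanding both sides with the chain rule relative to $H$, using $I(X_i;X_{<i}\mid W)=0$ to cancel the driving term, with the interaction-information symmetry of Fact~\ref{fact:II-properties} doing the bookkeeping that converts the ``value of information'' inequality into the dependence inequality that the substitutes structure directly controls. Steps~1 and~3 are then routine manipulations with the chain rule and the definition of $\epsilon$-MI consensus.
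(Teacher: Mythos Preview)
Your Steps~1 and~3 are exactly the paper's argument: rewrite the loss as $I(X_1,\dots,X_n;W\mid H^t)$ via the chain rule (using that $H^t$ is a function of the signals), bound it by $\sum_i I(X_i;W\mid H^t)$ via subadditivity, and invoke $\epsilon$-MI consensus. Your reduction in Step~2 to the inequality $I(X_i;X_{<i}\mid W,H^t)\le I(X_i;X_{<i}\mid H^t)$ is also correct.

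The gap is precisely where you flag it, and your proposed resolution does not work. The hypothesis you have available is $I(X_i;X_{<i}\mid W)=0$, an \emph{unconditional} statement; the inequality you need is conditional on $H^t$, which is a function of \emph{all} the signals. There is no purely information-theoretic manipulation (``expand with the chain rule relative to $H$ and cancel'') that converts the former into the latter for an arbitrary function $H$ of the signals; interaction-information symmetry only shuffles which pair of variables you are comparing, it does not supply the sign. What is actually needed is a \emph{structural} fact about the Round Robin Protocol: because each announcement $h_j^s$ depends only on $X_j$ and the prior public history, fixing $H^t$ amounts to imposing a separate restriction on each $X_j$ individually. Conditioning on such a product event preserves conditional independence, so in fact $X_1,\dots,X_n$ remain independent given $W$ \emph{after} conditioning on $H^t$ (this is the paper's Observation~\ref{obs}). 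That gives the much stronger conclusion $I(X_i;X_{<i}\mid W,H^t)=0$, which makes your inequality immediate and lets Lemma~\ref{lemma:substitutes} apply verbatim to the conditional distribution.

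In short: you correctly isolate the only nontrivial step, but the key that unlocks it is a combinatorial property of the protocol (history $=$ independent per-agent restrictions), not an identity among mutual informations. Once you invoke that observation, your proof is complete and coincides with the paper's.
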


We will use the following Lemma in the proof:

\begin{lemma}[Subadditivity for substitutes]\label{lemma:substitutes}
    When $X$ and $Y$ are independent conditioning on $Z$: 
\begin{description}
    \item[Nonnegative Interaction Information] $I(X;Y;Z)\geq 0$;
    \item[Conditioning Reduces Mutual Information] $I(Y;Z|X)\leq I(Y;Z)$;
    \item[Subadditivity of Mutual Information] $I(X,Y;Z)\leq I(X;Z)+I(Y;Z)$.
    \end{description}
    Moreover, when $X_1, \ldots, X_n$  are independent conditioning on $W$, $$ I(X_1, \ldots, X_n; W)\leq \sum_{i=1}^n I(X_i;W).$$
\end{lemma}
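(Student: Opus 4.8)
The plan is to prove the three bullet points in sequence, since each feeds the next, and then bootstrap the final $n$-variable inequality by induction.

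First I would establish \textbf{Nonnegative Interaction Information}. By definition $I(X;Y;Z) = I(X;Y) - I(X;Y|Z)$. When $X$ and $Y$ are independent conditioning on $Z$, we have $I(X;Y|Z) = 0$, so $I(X;Y;Z) = I(X;Y) \geq 0$ by the non-negativity of mutual information (Fact~\ref{cor:MI-prop}). This is the one-line core of the whole lemma.

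Next, \textbf{Conditioning Reduces Mutual Information}. Using the symmetry of interaction information (Fact~\ref{fact:II-properties}), $I(X;Y;Z) = I(Y;Z;X) = I(Y;Z) - I(Y;Z|X)$. Since we just showed $I(X;Y;Z) \geq 0$, this rearranges to $I(Y;Z|X) \leq I(Y;Z)$. Then \textbf{Subadditivity of Mutual Information} follows from the chain rule: $I(X,Y;Z) = I(X;Z) + I(Y;Z|X) \leq I(X;Z) + I(Y;Z)$, where the inequality is exactly the bullet just proved.

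Finally, for the $n$-variable statement I would induct on $n$. The base case $n=1$ is trivial and $n=2$ is the subadditivity bullet. For the inductive step, write $I(X_1,\ldots,X_n;W) = I(X_1,\ldots,X_{n-1};W) + I(X_n;W|X_1,\ldots,X_{n-1})$ by the chain rule. The second term is bounded by $I(X_n;W)$ provided $X_n$ and $(X_1,\ldots,X_{n-1})$ are independent conditioning on $W$ — which holds because the $X_i$ are mutually independent given $W$, so any sub-collection is independent of any disjoint sub-collection given $W$; this is where I would be slightly careful, invoking that the grouped variable $(X_1,\ldots,X_{n-1})$ together with $X_n$ still satisfies the pairwise conditional-independence hypothesis needed for the ``conditioning reduces mutual information'' bullet applied with $X = (X_1,\ldots,X_{n-1})$, $Y = X_n$, $Z = W$. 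The first term is bounded by $\sum_{i=1}^{n-1} I(X_i;W)$ by the induction hypothesis, giving the result. The main (minor) obstacle is making the grouping argument for conditional independence precise; everything else is a direct chain of the facts already catalogued in the preliminaries.
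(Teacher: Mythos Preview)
Your proposal is correct and follows essentially the same approach as the paper's own proof: establish $I(X;Y|Z)=0$ from conditional independence, deduce nonnegativity of the interaction information, use symmetry of interaction information to get that conditioning reduces mutual information, apply the chain rule for subadditivity, and finish the $n$-variable case by induction. The only difference is that you spell out the inductive grouping argument (that $(X_1,\ldots,X_{n-1})$ and $X_n$ remain conditionally independent given $W$) more carefully than the paper, which simply says ``the moreover follows by using induction and subadditivity.''
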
 

\begin{proof}[Proof of Lemma~\ref{lemma:substitutes}]
Note that $I(X;Y|Z) = 0$ because $X$ and $Y$ are independent after conditioning on $Z$.  

Nonnegative Interaction Information follows because $I(X;Y;Z)=I(X;Y)-I(X;Y|Z) = I(X;Y) \geq 0$.  

Next $I(Y; Z|X) \geq I(Y; Z)$  follows after adding $I(Y; Z|X)$ to each item in the following derivation:  $0 \leq I(X;Y;Z) =  I(Y;Z;X) = I(Y; Z) - I(Y; Z|X)$  where the inequality is by  nonnegative interaction information, the first equality is from the symmetry of interaction information, and the second equality is from the definition of interactive information.

Next, subadditivity immediately follows because: $I(X, Y; Z) = I(X; Z) + I(Y; Z|X) \geq I(X;Z) + I(Y;Z)$ where the equality is from the chain rule and the inequality is because conditioning reduces mutual information. 

The moreover follows by using induction and subadditivity.  

\end{proof}

We require an additional observation that no history will disrupt the special information structure.

\begin{restatable}{observation}{obs}\label{obs}
For any fixed history in the Round Robin Protocol, when $X_1,X_2,\cdots,X_n$ are substitutes (complements), they are still substitutes (complements) conditioning on any history. 
\end{restatable}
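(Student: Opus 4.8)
The plan is to prove the (equivalent, slightly more uniform) statement: if $X_1,\dots,X_n$ are independent conditioning on a random variable $V$, then they remain independent conditioning on $(V,H)$ for any history $H$ generated by the Round Robin Protocol. Taking $V=W$ recovers the ``substitutes'' case, and taking $V$ to be a trivial (constant) random variable recovers the ``complements'' case. The structural fact that makes this work is the filtration requirement in Definition~\ref{protocol:consensus}: each declaration $h_i^t$ is a deterministic function of the \emph{single} signal $X_i$ together with the already-announced history $H_i^t$. So the history is assembled one declaration at a time, each new piece depending on only one private signal once we condition on everything announced before it; this is essential, since conditioning on an arbitrary function of $(X_1,\dots,X_n)$ (e.g.\ an XOR) generally destroys conditional independence.

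The workhorse is an elementary lemma, which I would state and prove by a short Bayes-rule computation: if $X_1,\dots,X_n$ are independent given $V$ and $Y=\phi(X_j,V)$ for some index $j$ and some function $\phi$, then $X_1,\dots,X_n$ are independent given $(V,Y)$. After fixing $V=v$, the event $\{Y=y\}$ depends only on $X_j$ (it is $\{X_j\in\phi^{-1}(y,v)\}$), so conditioning on it reweights the $X_j$-marginal, leaves the marginals of the other $X_i$ unchanged, and preserves independence. Concretely, for $x_j$ with $\phi(x_j,v)=y$ one expands $\Pr[X_1=x_1,\dots,X_n=x_n\mid V=v,Y=y]$ using the definition of conditioning and the product form of $\Pr[\,\cdot\mid V=v]$, and checks it equals $\prod_i\Pr[X_i=x_i\mid V=v,Y=y]$; the identity $\Pr[X_i=x_i\mid V=v,Y=y]=\Pr[X_i=x_i\mid V=v]$ for $i\neq j$ (because $X_i\perp X_j\mid V$ and $Y$ is a function of $X_j$ and $V$) is what makes the factors line up.

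I would then enumerate the declarations comprising $H$ in announcement order as $g_1,g_2,\dots,g_m$, with $g_k$ announced by agent $i(k)$, so that the filtration requirement gives $g_k=\phi_k(X_{i(k)},g_1,\dots,g_{k-1})$. Now apply the lemma iteratively, starting from the hypothesis that $X_1,\dots,X_n$ are independent given $V_0$ (with $V_0=W$ for substitutes, $V_0$ trivial for complements): at step $k$ invoke it with $V=(V_0,g_1,\dots,g_{k-1})$, $j=i(k)$, $Y=g_k$, obtaining independence given $(V_0,g_1,\dots,g_k)$. After $m$ steps this is independence given $(V_0,H)$, i.e.\ exactly the substitutes (resp.\ complements) property conditioned on every realization of $H$. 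The same argument works verbatim for any prefix of the history, such as $H^t$ or $H_i^t$, by stopping the induction early.

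I do not anticipate a real obstacle; the claim is structural, not quantitative. The only thing needing care is bookkeeping --- making sure that after conditioning on $g_1,\dots,g_{k-1}$ the next declaration $g_k$ is genuinely a function of one private signal (and the conditioned variables), which is precisely the filtration condition --- plus the usual harmless handling of conditioning on probability-zero events, which is routine since everything in sight is discrete. This observation is exactly what licenses applying Lemma~\ref{lemma:substitutes} to the history-conditioned distributions in the proof of Theorem~\ref{thm:agree}.
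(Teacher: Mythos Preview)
Your proposal is correct and takes essentially the same approach as the paper: both exploit the filtration requirement to see that, once prior announcements are fixed, each new declaration constrains only a single $X_j$, so conditioning on it preserves (conditional) independence. Your unification via a parameter $V$ (with $V=W$ for substitutes, $V$ trivial for complements) and explicit inductive application of a one-step lemma is a slightly cleaner packaging, but the underlying argument is the same as the paper's ``independent restrictions'' calculation.
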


We defer the proof to Appendix~\ref{sec:add}. Aided by the above information-theoretic properties, we are ready to show our direct and short proof.

\begin{proof}[Proof for Theorem~\ref{thm:agree}]
If after $t$ rounds the Round Robin Protocol achieves $\epsilon$-MI consensus, then, by definition, for all $i$, $$I(X_i;W|H^{t})\leq \epsilon.$$  After $t$ rounds, the amount of information we have aggregated is $I(H^t;W)$. We can aggregate at most $I(X_1,X_2,...,X_n;W)$ information when all agents' private information is revealed explicitly. Thus, the amount of information that has not been aggregated is 
	
	\begin{align*}
	& I(X_1,X_2,...,X_n;W)-I(H^t;W)\\ \tag{The history $H^t$ is a function of $X_1, \ldots, X_n$}
	=&I(X_1,X_2,...,X_n,H^t;W)-I(H^t;W)\\ \tag{Chain rule}
	=& I(X_1,X_2,...,X_n;W|H^t)\\ \tag{Sub-additivity}
	 \leq & \sum_i I(X_i;W|H^t)\leq n\epsilon
	\end{align*} 
	
	The inequality is from subadditivity: note that by Observation~\ref{obs}  because $X_1, \ldots, X_n$ are independent after conditioning on $W$,  $X_1, \ldots, X_n$ are still independent after conditioning on $H$.     
\end{proof}

\section{Illuminating Prediction Markets with Information Theory}

\subsection{Prediction Market Overview}

A prediction market is a place to trade information. For example, a market maker can open a prediction market for the next presidential election with two kinds of shares, $D$-shares and $R$-shares. If Democrats wins, each $D$-share pays out one dollar but $R$-shares are worth nothing. If Republicans win, each $R$-share is worth one dollar and $D$-shares are worth nothing. People reveal their information by trading those shares. For example, if an agent believes Democrats will win with probability 0.7, he should buy D-shares as long as its price is strictly lower than \$ 0.7. If people on balance believe that the current price is less than the probability with which the event will occur, the demand for shares (at that price) will outstrip the supply, driving up the price.  Similarly if the price is too high they will buy the opposite share as the prices should sum to 1 since exactly one of the two will payout 1.  

 Hanson \cite{hanson2003combinatorial,hanson2012logarithmic} proposes a model of prediction markets that is theoretically equivalent to the above which we describe below.  Instead of buying/selling shares to change the price, the agents simply change the market price directly.  We define this formally below.


\subsection{Preliminaries for Prediction Markets}

We introduce prediction markets formally and relate them to the Standard Consensus Protocol. We focus on prediction markets which measure the accuracy of a prediction using the logarithmic scoring rule.  

\begin{definition} (Logarithmic scoring rule~\cite{https://doi.org/10.1111/j.2517-6161.1952.tb00104.x})
Fix an outcome space $\Sigma$ for a signal $\sigma$.  Let $\mathbf{q} \in \Delta_{\Sigma}$ be a reported distribution. 

The Logarithmic Scoring Rule maps a signal and reported distribution to a payoff as follows:
$$L(\sigma,\mathbf{q})=\log (\mathbf{q}(\sigma)).$$
\end{definition}

\begin{definition}[Market scoring rule ~\cite{hanson2003combinatorial,hanson2012logarithmic,chen2012utility}]
    We build a market for random variable  $W$ as follows:  the market sets an initial belief $\mathbf{p}_0$ for  $W$.  A sequence of agents is fixed.  Activity precedes in rounds.  In the $i$th round, the corresponding agent can change the market price from $\mathbf{p}_i$ to $\mathbf{p}_{i+1}$, and will be compensated   $L(W,\mathbf{p}_{i+1})-L(W,\mathbf{p}_{i})$ after $W$ is revealed. 
\end{definition}

Let the signal $\sigma$ be drawn from some random process with distribution $\mathbf{p} \in \Delta_\Sigma$.

Then the expected payoff of the Logarithmic Scoring Rule is:

\begin{align} 
\E_{\sigma \leftarrow \mathbf{p}}[L(\sigma,\mathbf{q})]=\sum_{\sigma}\mathbf{p}(\sigma)\log \mathbf{q}(\sigma)=L(\mathbf{p},\mathbf{q})
\end{align}

It is well known (and easily verified) that this value will be uniquely maximized if and only if $\mathbf{q}=\mathbf{p}$.  Because of this, the logarithmic scoring rule is called a \emph{strictly proper scoring rule}.

\paragraph{Agent Strategies}  We would like to require agents to commit to rational strategies.  In such a case, while agents may hide information, they will not try to maliciously trick other agents by misreporting their beliefs.  Formally, we assume that each agent $i$ plays a strategy of the following form: they change the market price to $\Pr[W=w|S_h(X_i),H = h]$ where $H$ is the history and $S_h$ is a possibly random function of $X_i$ that is allowed to depend on the history. That is, she declares the rational belief conditioning on the realization of $S_h(X_i)$ and what she can infer from the history. 

A natural question is whether we should expect fully strategic agents to behave maliciously rather than just hiding information.  Previous work~\cite{AnunrojwantCWX2019AliceBobAlice} studies a restricted setting and shows that misreporting or tricking other agents does not happen in equilibrium even when agents are allowed to. 

Two additional observations: 1) the agent's strategy might not actually reveal $S_h(X_i)$ because it could be that $\Pr[W=w|S_h(x_i),H = h] = \Pr[W=w|S_h(x_i'),H = h]$ for $x_i \neq x_i'$.  2) If the agents' always use $S_h(X_i) = X_i$, and update the market price based on their entire message, then this essentially reduces to the Standard Consensus Protocol.  In this case, the agents are playing myopically and optimizing their payment at each step.

\subsection{Expected Payment = Conditional Mutual Information}

Amazingly, we show that the expected payment to an agent, is just a function of the conditional mutual information of the random variables that he reveals. This connects the expected payment in prediction markets and amount of information.

\begin{lemma}[Expected payment = conditional mutual information]  \label{lemma:payment_is_conditionMI}
In a prediction market with the log scoring rule, the agent who changes the belief from $\Pr[W|H]$ to $\Pr[W|X=x,H]$ for all $x$ will be paid $I(X;W|H)$ in expectation. 
\end{lemma}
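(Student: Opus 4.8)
The plan is to compute the expected payment directly from the definition and recognize the resulting expression as a conditional mutual information. The agent changes the market price from $\mathbf{p} = \Pr[W|H]$ to $\mathbf{p}' = \Pr[W|X=x, H]$, so by the market scoring rule her payment is $L(W, \mathbf{p}') - L(W, \mathbf{p}) = \log \Pr[W \mid X=x, H] - \log \Pr[W \mid H]$, evaluated at the realized values. The only subtlety is the order of conditioning and taking expectations: the price $\mathbf{p}'$ she posts depends on her realized signal $x$ and the realized history $h$, and the payment is realized once $W$ is drawn. So the expected payment is the expectation over the joint distribution of $(W, X, H)$ of the quantity $\log \frac{\Pr[W \mid X, H]}{\Pr[W \mid H]}$.

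First I would write
\[
\E[\text{payment}] = \sum_{w, x, h} \Pr[W=w, X=x, H=h] \log \frac{\Pr[W=w \mid X=x, H=h]}{\Pr[W=w \mid H=h]}.
\]
Then I would multiply numerator and denominator inside the log by $\Pr[X=x \mid H=h]$ to convert the ratio of conditionals into the standard mutual-information ratio: $\frac{\Pr[W=w \mid X=x, H=h]\Pr[X=x\mid H=h]}{\Pr[W=w \mid H=h]\Pr[X=x \mid H=h]} = \frac{\Pr[W=w, X=x \mid H=h]}{\Pr[W=w \mid H=h]\Pr[X=x \mid H=h]}$. At this point the summand is exactly the integrand in the definition of $I(X; W \mid H=h)$, weighted by $\Pr[H=h]$, so regrouping the sum as $\sum_h \Pr[H=h] \sum_{w,x} \Pr[W=w, X=x \mid H=h] \log(\cdots) = \E_H I(X; W \mid H=h) = I(X;W\mid H)$ finishes it.

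There is essentially no hard step here — this is a short manipulation — but the one place to be careful is justifying that the relevant probability space really is the joint law of $(W, X, H)$: the history $H$ is itself a (possibly randomized) function of the agents' signals and prior randomness, and one must make sure $H$ is the history \emph{as seen before this agent acts} so that $\Pr[W \mid H]$ is well-defined and matches the pre-change market price. Given the setup in Definition~\ref{protocol:consensus} and the agent-strategy paragraph, $H=h$ is a legitimate conditioning event and the market price before the change is $\Pr[W \mid H=h]$ by assumption, so this is immediate. One should also implicitly assume all the conditional probabilities appearing are positive (events of probability zero contribute nothing), which is standard for these information-theoretic identities. I would state the chain of equalities as a single \texttt{align*} block with brief annotations and conclude.
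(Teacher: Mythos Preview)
Your proposal is correct and follows essentially the same approach as the paper: write the expected payment as an expectation of $\log\frac{\Pr[W\mid X,H]}{\Pr[W\mid H]}$ and recognize this as $I(X;W\mid H)$. The paper's version is slightly terser---it fixes $H$ and writes the inner expectation as $\sum_{W,X}\Pr[W,X\mid H]\log\frac{\Pr[W\mid H,X]}{\Pr[W\mid H]}$ directly without the explicit ``multiply and divide by $\Pr[X=x\mid H=h]$'' step---but your added detail and the remarks about the probability space are fine and do not constitute a different route.
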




\begin{proof} [Proof of Lemma~\ref{lemma:payment_is_conditionMI}] Fix any history $H$:
\begin{align*}
    &\E_{X, W| H} L(W,\Pr[W|H,X])-L(W,\Pr[W|H])\\
    &=\sum_{W,X} \Pr[W,X|H] \log\left(\frac{\Pr[W|H,X]}{\Pr[W|H]}\right)\\
    &= I(X;W|H)
\end{align*}
\end{proof}

\paragraph{Interpretation of Previous Results} When agents participate in the market one by one and update the market price to $\Pr[W=w|S_h(X_i),H = h]$ , the list of market prices can be interpreted as running a Standard Consensus Protocol.  The connection between the expected payment and conditional mutual information leads to the following interpretations. 

\begin{description}
\item [$\epsilon$-MI Consensus] If round $t$ achieves $\epsilon$-MI consensus, the expected payment of any particular agent obtained by changing the market price to her Bayesian posterior is at most  $\epsilon$. \item [Quick Consensus] For any round $t$ that does not achieve $\epsilon$-MI consensus, in round $t+1$ at least one agent will obtain at least an $\epsilon$ payment in expectation. However, Lemma~\ref{lemma:payment_is_conditionMI} also implies that the total expected payment in the prediction market is bounded by the entropy of $W$. Thus, at most $\frac{H(W)}{\epsilon}$ rounds are needed for $\epsilon$-MI consensus in the prediction market. 
\item [No False Consensus with Substitutes] Even when the market price has noise and agents only reveal a summary of their beliefs by participating in the markets, if round $t$ achieves $\epsilon$-MI consensus, no agent has information with value greater than $\epsilon$.  Then subadditivity implies that currently, the expected payment of all agents' private information together is bounded by $n\epsilon$, thus is not valuable. 
\end{description}

\subsection{Strategic Revelation}
This section will use the above information-theoretic properties to provide an alternative proof for the results proved in Chen et al. \cite{chen2010gaming}.

\begin{definition}[Alice-Bob-Alice (ABA)]
	Alice and Bob hold private information $X_A,X_B$ related to event $W$ respectively. There are three stages. Alice can change the market belief at stage 1 and stage 3. Bob can change the market belief at stage 2. 
\end{definition}

We assume that the strategic players can hide their information but not behave maliciously. A strategy profile is a profile of the players' strategies.  Moreover, like the original paper, we assume that an agent not only reports their belief $\Pr[W=w|S_h(X_i),H = h]$  but also reveals their information $S_h(X_i)$.  \footnote{This addresses knife-edge situations where $\Pr[W=w|S_h(x_i),H = h] = \Pr[W=w|S_h(x_i'),H = h]$ for $S_h(x_i) \neq S_h(x_i')$.  That is, in the XOR case, if Bob reveals his full signal, then he will not only report the belief as $\frac12$ but also announce his full signal. }


\begin{proposition} [ABA \& Information structures \cite{chen2010gaming}]\label{prop:aba}
In prediction market based on logarithmic scoring rule,
	\begin{description}
\item[Substitutes] when Alice and Bob's private information are substitutes, the strategy profile where Alice reveals $X_A$ at stage 1 and Bob reveals $X_B$ at stage 2 is an equilibrium;
\item[Complements] when Alice and Bob's private information are complements, the strategy profile where Bob reveals $X_B$ at stage 2 and Alice reveals $X_A$ at stage 3 is an equilibrium. 
\end{description}
\end{proposition}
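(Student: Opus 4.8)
The plan is to convert every agent's expected payoff into a (conditional) mutual information via Lemma~\ref{lemma:payment_is_conditionMI}, after which the optimality of the prescribed revelation times collapses to the sign of the interaction information, i.e.\ to the sub/super-additivity facts in Lemma~\ref{lemma:substitutes} and their ``complements'' mirror images. Since the agents may hide information but not misreport, I would model an arbitrary deviation of Alice by a pair $(U,V)$, where $U$ is a (possibly randomized) garbling of $X_A$ that she posts at stage~$1$ and $V$ is a (possibly randomized) garbling of $X_A$ together with the stage-$3$ history that she posts at stage~$3$; a deviation of Bob is a single garbling $U'$ of $X_B$ and the history he faces. Because Bob's prescribed strategy always reveals $X_B$ at stage~$2$, the public history after stage~$2$ is $(U,X_B)$ (with $U$ trivial in the complements profile, where Alice posts nothing at stage~$1$), so by Lemma~\ref{lemma:payment_is_conditionMI} Alice's total expected payment under the deviation is $I(U;W)+I(V;W\mid U,X_B)$ and Bob's is a single term of the form $I(U';W\mid\,\cdot\,)$.

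For the \textbf{substitutes} case, under the prescribed profile Alice is paid $I(X_A;W)+I(X_A;W\mid X_A,X_B)=I(X_A;W)$ and Bob is paid $I(X_B;W\mid X_A)$. For an arbitrary deviation $(U,V)$ of Alice I would chain three steps: monotonicity (Fact~\ref{cor:MI-prop}) gives $I(V;W\mid U,X_B)\le I(X_A;W\mid U,X_B)$; Observation~\ref{obs} says $X_A\perp X_B\mid W$ persists after conditioning on the history $U$, so the ``conditioning reduces mutual information'' clause of Lemma~\ref{lemma:substitutes}, used in the world conditioned on $U$, gives $I(X_A;W\mid U,X_B)\le I(X_A;W\mid U)$; and the chain rule together with ``$U$ is a function of $X_A$'' gives $I(U;W)+I(X_A;W\mid U)=I(U,X_A;W)=I(X_A;W)$. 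Hence Alice's deviation is worth at most $I(X_A;W)$, so revealing $X_A$ at stage~$1$ is a best response. Bob cannot improve either: $U'$ is a garbling of $X_B$ given $X_A$, so monotonicity gives $I(U';W\mid X_A)\le I(X_B;W\mid X_A)$. So the profile is an equilibrium.

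For the \textbf{complements} case, under the prescribed profile Alice is paid $0+I(X_A;W\mid X_B)$ and Bob is paid $I(X_B;W)$. For an arbitrary deviation $(U,V)$ of Alice: monotonicity gives $I(V;W\mid U,X_B)\le I(X_A;W\mid U,X_B)$; since $X_A\perp X_B$ and $U$ is a function of $X_A$ we have $U\perp X_B$, hence by the symmetry of interaction information $I(U;W;X_B)=I(U;X_B)-I(U;X_B\mid W)=-I(U;X_B\mid W)\le 0$, which by the definition of interaction information is precisely $I(U;W)\le I(U;W\mid X_B)$; and the chain rule conditioned on $X_B$ gives $I(U;W\mid X_B)+I(X_A;W\mid U,X_B)=I(U,X_A;W\mid X_B)=I(X_A;W\mid X_B)$. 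Combining, Alice's deviation is worth at most $I(X_A;W\mid X_B)$, so revealing $X_A$ at stage~$3$ is a best response; Bob acts at the empty history and $I(U';W)\le I(X_B;W)$ by monotonicity, so revealing $X_B$ at stage~$2$ is his. Thus this profile is an equilibrium as well.

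I expect the main obstacle to be bookkeeping rather than any deep step: one must model deviations as randomized, history-dependent garblings; check that the history each agent faces off the prescribed path is the one the chain-rule identities assume; and invoke Observation~\ref{obs} so that the relevant (conditional) independence survives conditioning on what has already been posted. The single conceptual ingredient is the sign fact --- interaction information is nonnegative for substitutes and nonpositive for complements --- which is exactly the equivalence between the sign of the interaction information and sub/super-additivity of the value of information that the paper builds; everything else is the chain rule and data processing.
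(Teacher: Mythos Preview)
Your proposal is correct and follows essentially the same route as the paper: convert payments to conditional mutual informations via Lemma~\ref{lemma:payment_is_conditionMI}, bound the stage-3 term by monotonicity, then use the sign of the interaction information (equivalently, that conditioning on $X_B$ decreases mutual information for substitutes and increases it for complements) together with the chain rule to collapse Alice's total to $I(X_A;W)$ or $I(X_A;W\mid X_B)$ respectively. Your $(U,V)$ are exactly the paper's $S_1(X_A),S_3(X_A)$, and your invocation of Observation~\ref{obs} to preserve conditional independence after conditioning on $U$ matches the paper's; the only cosmetic difference is that you spell out Bob's best-response check a bit more explicitly than the paper does.
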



We will use the following lemma, which is a direct analogue of Lemma~\ref{lemma:substitutes}

\begin{restatable}{lemma}{lemmacomplements}\label{lemma:complements} \emph{(Superadditivity for complements).} When $X$ and $Y$ are independent: 
\begin{description}
    \item[Nonpositive Interaction Information] $I(X;Y;Z)\leq 0$;
    \item[Conditioning Increases Mutual Information] $I(Y;Z|X)\geq I(Y;Z)$;
    \item[Superadditivity of Mutual Information] $I(X,Y;Z)\geq I(X;Z)+I(Y;Z)$.
    \end{description}
    Moreover, when $X_1, \ldots, X_n$  are independent conditioning on $W$, $$ I(X_1, \ldots, X_n; W)\geq \sum_{i=1}^n I(X_i;W).$$
\end{restatable}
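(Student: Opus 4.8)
The plan is to mirror exactly the structure of the proof of Lemma~\ref{lemma:substitutes}, flipping every inequality, since the hypothesis here ($X$ and $Y$ independent, i.e.\ \emph{unconditionally}) makes the \emph{unconditional} mutual information $I(X;Y)$ vanish rather than the conditional one. First I would observe that $I(X;Y)=0$ because $X$ and $Y$ are independent. Then Nonpositive Interaction Information follows immediately from the definition: $I(X;Y;Z)=I(X;Y)-I(X;Y|Z)=-I(X;Y|Z)\leq 0$, using non-negativity of the conditional mutual information $I(X;Y|Z)\geq 0$ from Fact~\ref{cor:MI-prop}.

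Next I would derive Conditioning Increases Mutual Information. Starting from $0\geq I(X;Y;Z)=I(Y;X;Z)=I(Y;Z)-I(Y;Z|X)$ — where the first equality is the symmetry of interaction information (Fact~\ref{fact:II-properties}) and the second is the definition of interaction information applied with the roles reordered — one rearranges to get $I(Y;Z|X)\geq I(Y;Z)$. Then Superadditivity is immediate from the chain rule: $I(X,Y;Z)=I(X;Z)+I(Y;Z|X)\geq I(X;Z)+I(Y;Z)$, the inequality being the statement just proved.

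For the ``moreover'' clause I would use induction on $n$, exactly as in Lemma~\ref{lemma:substitutes}. The base case $n=1$ is trivial. For the inductive step, note that when $X_1,\dots,X_n$ are independent conditioning on $W$, the pair $(X_1,\dots,X_{n-1})$ and $X_n$ are \emph{unconditionally} independent — wait, this is the subtle point and the main obstacle: the hypothesis of the ``moreover'' is conditional independence given $W$, but the three-variable superadditivity requires \emph{unconditional} independence of the two arguments. So I need to be careful: actually, to apply the three-variable result with $Z=W$, I need $X_n$ independent of $(X_1,\dots,X_{n-1})$ unconditionally, which does \emph{not} follow from conditional independence given $W$. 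Hmm — so the correct route is instead to apply superadditivity \emph{conditionally}: more precisely, since the three-variable statements all have conditional analogues (conditioning everything on $W$), and conditioning on $W$ makes $X_n$ independent of $(X_1,\dots,X_{n-1})$ given $W$\ldots no. Let me reconsider: the cleanest fix is that Superadditivity of Mutual Information as stated applies with $X=X_n$, $Y=(X_1,\dots,X_{n-1})$, $Z=W$ requires $X_n \perp (X_1,\dots,X_{n-1})$. This is false in general. The resolution used by the paper must be: the ``moreover'' is actually about complements ($X_1,\dots,X_n$ mutually independent), not substitutes — but the statement as typeset says ``conditioning on $W$'', which appears to be a typo carried over from Lemma~\ref{lemma:substitutes}; I would prove it under the hypothesis that $X_1,\dots,X_n$ are mutually independent (complements), in which case $X_n\perp(X_1,\dots,X_{n-1})$ holds, the three-variable superadditivity with $Z=W$ applies, and the induction goes through: $I(X_1,\dots,X_n;W)\geq I(X_1,\dots,X_{n-1};W)+I(X_n;W)\geq \sum_{i=1}^{n-1}I(X_i;W)+I(X_n;W)$. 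The main obstacle, then, is recognizing and correctly handling this hypothesis mismatch; everything else is a routine sign-flip of the substitutes argument.
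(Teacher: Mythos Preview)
Your proposal is correct and follows essentially the same route as the paper: the paper also starts from $I(X;Y)=0$, deduces nonpositive interaction information, uses symmetry to get $I(Y;Z|X)\geq I(Y;Z)$, applies the chain rule for superadditivity, and then says only ``the moreover follows by using induction and superadditivity.'' You additionally spotted that the ``conditioning on $W$'' in the moreover clause is a carry-over typo from Lemma~\ref{lemma:substitutes} (it should read that $X_1,\dots,X_n$ are independent, i.e.\ complements), and your fix is the right one; the paper's proof silently assumes the intended hypothesis.
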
 

The proof is directly analogous to that of Lemma~\ref{lemma:substitutes} and we defer it to Section~\ref{sec:add}.

\begin{proof}[Proof of Proposition~\ref{prop:aba}]
	First, for Bob, to maximize his expected payment, it's always optimal to reveal $X_B$ in stage 2 because Bob is paid for his conditional mutual information which is maximized by full revelation. It's left to analyze Alice's optimal strategy given that Bob reveals $X_B$ in stage 2. 
	
	If Alice reveals all her information either at the first stage or at the third stage, we only need to compare $I(X_A;W)$ and $I(X_A;W|X_B)$.  But $I(X_A;W)-I(X_A;W|X_B)=I(X_A;X_B;W)$. Thus, the results immediately follow from the fact that the sign of the interaction information is nonnegative/nonpositive when the information are substitutes/complements.
	
	It's left to consider the general strategy where Alice reveals part of her information at stage 1, say $S_1(X_A)$, and part of her information at stage 3, say $S_3(X_A)$. In this case, she will be paid $I(S_1(X_A);W)+I(S_3(X_A);W|X_B,S_1(X_A))$ according to Lemma~\ref{lemma:payment_is_conditionMI}. 
	
	First, it is optimal for Alice to reveal all her remaining information at the last stage since $I(S_3(X_A);W|X_B,S_1(X_A))\leq I(X_A;W|X_B,S_1(X_A))$ due to monotonicity of mutual information. Thus, Alice will reveal $S_3(X_A)=X_A$.

	Under the substitutes structure, Alice's expected payment in stage 3 is $$I(X_A;W|X_B,S_1(X_A))\leq I(X_A;W|S_1(X_A))$$ because 1) fixing any $S_1(X_A)$, $X_B$ and $X_A$ are still independent conditioning on $W$ thus are substitutes (Observation~\ref{obs}) and 2) conditioning decreases mutual information for substitutes. Therefore, 
	
\begin{align*}
	&I(S_1(X_A);W)+I(X_A;W|X_B,S_1(X_A))\\ \tag{conditioning decreases mutual information}
	\leq &I(S_1(X_A);W)+I(X_A;W|S_1(X_A))\\ \tag{chain rule}
	= &I(X_A,S_1(X_A);W)= I(X_A;W).
\end{align*}
	
Thus $I(X_A;W)$ upper bounds Alice's total payment, but this is what she receives if she reveals all her information in the first round.  
		
	Under complements structure, Alice's expected payment in stage 1 is $$I(S_1(X_A);W)\leq I(S_1(X_A);W|X_B)$$ because 1) $S_1(X_A)$ and $X_B$ are independent thus are complements and 2) conditioning decreases mutual information for complements. Therefore, 
	
	\begin{align*}
		&I(S_1(X_A);W)+I(X_A;W|X_B,S_1(X_A))\\ \tag{conditioning increases mutual information}
		\leq &I(S_1(X_A);W|X_B)+I(X_A;W|X_B,S_1(X_A))\\ \tag{chain rule}
		=&I(X_A,S_1(X_A);W|X_B)=I(X_A;W|X_B)
	\end{align*}
	
Thus $I(X_A;W|X_B)$ upper bounds Alice's total payment, but this is what she receives if she reveals all her information in the second round.  
	
\end{proof}

\section{Conclusion and Discussion}

We have developed an information theoretic framework to analyze the aggregation of information both in the Round Robin protocol and prediction markets. We showed that when agents' private information about an event is independent conditioning on the event's outcome, then, when the agents are in near consensus in any Round Robin Protocol, their information is nearly aggregated.  We additionally reproved 1) the Standard/Discretized Consensus Protocol quickly converges [Aaronson 2005] in the information-theoretic framework; and 2) results from Chen et al. \cite{chen2010gaming}  on when prediction market agents should  release information to maximize their payment. 

Our analysis of the Alice Bob Alice prediction market straightforwardly extends to any sequence of an arbitrary number of agents.  By applying the same argument, it is easily shown that: 1) every agent revealing all their information immediately is an equilibrium in the substitutes case; and 2) every agent revealing all their information as late as possible is an equilibrium in the complements case.

One possible extension is to study similar protocols when the agents are on networks.  This was already pioneered by  Parikh and Krasucki \cite{parikh1990communication} and also examined by Aaronson~\cite{aaronson2005complexity} who used a spanning tree structure to show that a particular agreement protocol converges quickly.  Perhaps using the tools of this paper, one could analyze more general protocols.  

Another possible extension is to look at generalizations of Shannon mutual information.  For example, starting with any strictly proper scoring rule one can develop a Bregman mutual information~\cite{Kong:2019:ITF:3309879.3296670} and ask whether our proofs will go through using this new mutual information definition.  
When using the logarithmic scoring rule, one arrives at the standard Shannon mutual information used in this paper.  However, different scoring rules are possible as well.  All such mutual informations will still obey the chain rule and non-negativity. As such, generalized versions of Lemma~\ref{lem:increasing} will hold.  However they may not be symmetric (and similarly their interactive information may not be symmetric).  Thus, our techniques cannot be straightforwardly adjusted to reprove Theorem~\ref{thm:agree} or Proposition~\ref{prop:aba} in this  manner. So while this generalizes our framework, finding a good application remains future work.

We hope that our framework can analyze additional settings of agents aggregating information.  One example would be more inclusive classes of agent signals than in the settings of this paper. However, perhaps our framework could also be applied to analyze broader settings such as social learning~\cite{golub2017learning} or rewards for improvements in machine learning outcomes~\cite{abernethy2011collaborative}, where either individually developed machine learning predictors are eventually combined in ensembles or the training data is augmented by individually procured training data. 







\newpage
\bibliographystyle{plainnat}
\bibliography{refs}

\newpage
\appendix
\section{Additional Proofs}\label{sec:add}
\lemincreasing*

\begin{proof}[Proof of Lemma~\ref{lem:increasing}]
Because the (conditional) mutual information is always non-negative, once we have established  the second property of the growth rate equaling the marginal value the first property of non-decreasing information follows immediately.  Thus, we start by proving the second property, the growth rate equals the marginal value.   

\begin{align}\label{mar}
I(H_{i+1}^t;W)-I(H_{i}^t;W)=I(H_{i+1}^t;W|H_{i}^t)= I(h_i^t;W|H_i^t).
\end{align}

The first equality holds due to chain rule and the fact that $H_{i+1}^t$ contains $H_{i}^t$. The second equality holds due to the fact that $H_{i+1}^t=(H_{i}^t,h_i^t)$. 

\end{proof}

\lemsubstantiallyincreasing*

\begin{proof}[Proof of Lemma~\ref{lem:substantiallyincreasing}]
In the Standard Consensus protocol, $I(h^t_{i};W|H^{t}_{i})=I(X_i;W|H^{t}_{i})\geq \epsilon$ because agent $i$ declares her Bayesian posterior as $h^{t}_{i}=\mathbf{p}_i^t=\Pr[W|X_i=x,H_i^t=h]$. 

\medskip

\emph{Discretized Consensus Protocol Proof Summary:} We will first obvserve that if the average over all possible histories $I(X_i;W|H_i^t)\geq \epsilon$, then there must be a set of histories with non-trivial weights such that $I(X_i;W|H_i^t=h)> \frac{\epsilon}{2}$. Fixing a history $h$, the summary function maps the agent's private information to three signals: high, low, and medium. This classifies the expectations conditioning on the private information $x$ into three categories: the high set, the low set, and the medium set. We will first show that when the private information is informative, because the medium set's expectations are close to the outsider's current expectations, the medium set's contribution to $I(X_i;W|H_i^t=h)$ will not be significant. Thus either the high set or the low set contributes a lot. We then prove that after compression,  the high set and the low set will still preserve a non-trivial amount of information. We will repeatedly use the fact that the mutual information is the expected KL divergence in the analysis.

\medskip

We first show that if $I(X_i;W|H_i^t)\geq \epsilon$ then, $\Pr_h[I(X_i;W|H_i^t=h) > \epsilon/2] \geq \epsilon/2$.  This follows from Markov's Inequality applied to $1 - I(X_i;W|H_i^t=h)$ which is always non-negative because $I(X_i;W|H_i^t=h)\leq H(W)\leq 1$. Formally, 

\begin{align*}
\Pr_h[I(X_i;W|H_i^t=h) \leq \epsilon/2] = & \Pr_h[1- I(X_i;W|H_i^t=h) \geq 1- \epsilon/2]\\ \tag{Markov's Inequality}
\leq & \frac{1-I(X_i;W|H_i^t)}{1- \epsilon/2}\\
\leq & \frac{1-\epsilon}{1- \epsilon/2}\leq 1- \epsilon/2
\end{align*}

\gs{removed the calculations?} \yk{I change the calulations}





\medskip

We now fix any history $h$ where $I(X_i;W|H_i^t=h) \geq \epsilon/2$.  We would like to show that \begin{align}
    I(h_i^t;W|H_i^t=h) \geq \frac{\epsilon^2}{32}\frac{1}{-\log_2 E^{-1}(\frac 12\epsilon)} \label{eq:goal}
\end{align} because then

\begin{align*}
& I(h_i^t;W|H^{t}_{i})\\
\geq &\sum_{I(X_i;W|H_i^t=h)\geq \frac{\epsilon}{2}} I(h_i^t;W|H_i^t=h) \cdot  \Pr[H_i^t=h] \\
\geq &\frac{\epsilon^2}{32}\frac{1}{-\log_2 E^{-1}(\frac{\epsilon}{2})}  
 \cdot \Pr_h\left[I(X_i;W|H_i^t=h)\geq\frac{\epsilon}{2}\right] \\ 
\geq &\frac{\epsilon^2}{32}\frac{1}{- \log_2 E^{-1}(\frac{\epsilon}{2})} \cdot  \frac{\epsilon}{2}\\
= &\frac{1}{64 }\epsilon^3\frac{1}{-\log_2 E^{-1}(\frac{\epsilon}{2})}
\end{align*}

and this proves the lemma.

Before showing Equation~\ref{eq:goal}, for notational clarity, we define $$ \psi(X_i) = h_i^t =\begin{cases}\text{high}& p_i^t> \overline{q_i^{t}}\\\text{low}&p_i^t< \underline{q_i^{t}}\\\text{medium}&\text{otherwise}.\end{cases}$$ Notice that once we fix a history, $\psi(X_i) = h_i^t$ is a function of $X_i$. Recall that we have defined $\overline{q_i^{t}}>q_i^t$ so that $D_{KL}(\overline{q_i^{t}},q_i^t)=\frac{\epsilon}{4}$, and $\underline{q_i^{t}}<q_i^t$ so that $D_{KL}(\underline{q_i^{t}},q_i^t)=\frac{\epsilon}{4}$.

Additionally, let $q$ be a shorthand for $q_i^t=\Pr[W=1|H_i^t=h]$. Let $p_x = \Pr[W=1|H_i^t=h,X_i=x]$ be the Bayesian posterior for $W=1$ conditioning on that agent $i$ receives $X_i=x$. Let $w_{x}=\Pr[X_i=x|H_i^t=h]$ be the prior probability that agent receives $X_i=x$. 
 Let $p_{hi}=\Pr[W=1|H_i^t=h,\psi(X_i)=\text{high}]$ be the Bayesian posterior for $W=1$ conditioning on that the agent $i$ announces `high'. Let $w_{hi}=\Pr[\psi(X_i)=\text{high}|H_i^t=h]$ be the prior probability that agent $i$ announces `high'. Analogously, we define $p_{lo}$, $w_{lo}$ (low), $p_{me}$, and $w_{me}$  (medium). 

Our goal in Equation~\ref{eq:goal} can then be restated as $$I(\psi(X_i);W|H_i^t=h) \geq  \frac{\epsilon^2}{32}\frac{1}{-\log_2 E^{-1}(\frac{\epsilon}{2})}.$$

Notice that: 
\begin{align}
& I(X_i;W|H_i^t=h) \\= &\nonumber  \sum_x \Pr[X_i=x|H_i^t=h] \sum_w \Pr[W=w|X_i=x,H_i^t=h]\log_2  \frac{\Pr[W=w|X_i=x,H_i^t=h]}{\Pr[W=w|H_i^t=h]}\nonumber \\
=& \sum_x \Pr[X_i=x|H_i^t=h] D_{KL}(p_x,q) \nonumber \\
= & \sum_x w_x D_{KL}(p_x,q) \label{eq:KL-sum}
\end{align} 

We can partition all $x$ into three categories, $\psi(x)=high,low, medium$.  
Because when $\psi(x)=medium$, $D_{KL}(p_x,q)\leq \frac{\epsilon}{4}$, when $I(X;W|H_i^t=h)\geq \frac{\epsilon}{2}$, we have \[\sum_{\psi(x)=high,low} w_x D_{KL}(p_x,q)\geq \frac{\epsilon}{4}.\] Thus either the low set or the high set contributes $\geq \frac{\epsilon}{8}$. Without loss of generality, we assume $\sum_{\psi(x)=high} w_x D_{KL}(p_x,q)\geq \frac{\epsilon}{8}$. 

Recall our goal is to show that given Equation~\ref{eq:KL-sum} was greater than $\epsilon/ 2$ then the following is large:

\begin{align}
I(\psi(X_i);W|H_i^t=h) =   \sum_{\psi(X_i) \in \{lo, me, hi\}}  w_{\psi(X_i)} \cdot D_{KL}(p_{\psi(X_i)}, q) 
                       \geq    w_{hi} \cdot D_{KL}(p_{hi}, q). \label{eq:high}
\end{align}

We can show this is large by lower-bounding both  $ w_{hi}$ and $ D_{KL}(p_{hi}, q)$.  

First, we will lower bound $ w_{hi}$ by upper bounding $D_{KL}(p_x,q)$.  Note that $$D_{KL}(p, q) \leq \max\{\log \frac{1}{q}, \log \frac{1}{1-q} \} = \max\{ -\log q, -\log 1-q \} $$ by recalling the formula $D_{KL}(p, q) = p \log \frac{p}{q} + (1-p) \frac{1-p}{1-q}$.  To upper bound this, we must show that $q$ is not too close to 0 or 1. Because $q\log_2 \frac{1}{q}+(1-q)\log_2  \frac{1}{1-q}=I(W;W|H_i^t=h)\geq I(X;W|H_i^t=h)\geq \frac 12\epsilon$, we have $E^{-1}(\frac {\epsilon}{2} )\leq q\leq 1- E^{-1}(\frac{\epsilon}{2} )$.  This gives us that  $D_{KL}(p_x,q) \leq  -\log_2 E^{-1}(\frac {\epsilon}{2})$.

We assumed that $\sum_{\psi(x)=high} w_x \cdot D_{KL}(p_x,q)\geq \frac{1}{8}\epsilon$, so we have:

$$  w_{hi} = \sum_{\psi(x)=high} w_x \geq \frac {\epsilon}{8} \frac{1}{- \log_2 E^{-1}(\frac{\epsilon}{2} )}.$$

Next, we lower bound $ D_{KL}(p_{hi}, q)$.  
$ D_{KL}(p_{hi}, q) \geq \frac{\epsilon}{4}$ because  for any $x$ where $\psi(x) = high$, we have that $p_x \geq  \overline{q}$ and $\overline{q}$ was defined so that  $ D_{KL}(\overline{q},q)  = \frac{\epsilon}{4}$ and for any $q' > \overline{q}$, $ D_{KL}(q',q) > \frac{\epsilon}{4}$.

Combining with Equation~\ref{eq:high} we are now done because

\begin{align*}
I(\psi(X_i);W|H_i^t=h) \geq  w_{hi} \cdot D_{KL}(p_{hi},q) \geq  \frac{\epsilon^2}{32}\frac{1}{-\log_2 E^{-1}(\frac{\epsilon}{2})}
\end{align*}

\end{proof}

\lemmacomplements*

\begin{proof}[Proof of Lemma~\ref{lemma:complements}]
The proof is directly analogous to that of Lemma~\ref{lemma:substitutes}

First, $I(X;Y) = 0$ because $X$ and $Y$ are independent.

Nonpositive Interaction Information follows because $I(X;Y;Z)=I(X;Y)-I(X;Y|Z) = - I(X;Y|Z) \leq 0$.  

Next $I(Y; Z|X) \geq I(Y; Z)$  because  $0 \geq I(X;Y;Z) =  I(Y;Z;X) = I(Y; Z) - I(Y; Z|X)$ where the inequality is by  nonpostive interaction information, the first equality is from the symmetry of interaction information, and the second equality is from the definition of interactive information.

Third, superadditivity immediately follows because: $I(X, Y; Z) = I(X; Z) + I(Y; Z|X) \geq I(X;Z) + I(Y;Z)$ where the equality is from the chain rule and the inequality is because conditioning increases mutual information. 

The moreover follows by using induction and superadditivity.  

\end{proof}

\obs*

\begin{proof}[Proof of Observation~\ref{obs}]
Let $D$ be some distribution over $\Sigma^n$ where $X_1,X_2,\cdots,X_n$ are all independent.
We first observe that after any ``independent'' restrictions on the realizations of $X_1,X_2,\cdots,X_n$, they are still independent.  That is, fix $\Sigma_1, \ldots, \Sigma_n$ where $\Sigma_i \subseteq \Sigma$ for all $i \in \{1, \ldots, n\}$ and let $\xi \subset \Sigma^n$ be the event where $x_i \in \Sigma_i$ for all $i \in \{1, \ldots, n\}$.  Then conditioning on $\xi$, $X_1,X_2,\cdots,X_n$ are independent as well.

For all $(x_1,x_2,\cdots,x_n)\in\Sigma_1\times\Sigma_2\times\cdots \Sigma_n$, 

\begin{align*}
\Pr_D[X_1=x_1,X_2=x_2,\cdots,X_n=x_n|\xi] &= \frac{\Pr_D[X_1=x_1,X_2=x_2,\cdots,X_n=x_n]}{\Pr_D[\xi]}\\
&=  \frac{\Pi_i \Pr_D[X_i=x_i]}{\Pi_i \Pr_D[X_i\in\Sigma_i]}\\
&=  \Pi_i \Pr_D[X_i=x_i|X_i\in\Sigma_i ]
\end{align*}

Moreover, 
$\Pr_D[X_i=x_i| X_i\in \Sigma_i] =  \Pr_D[X_i=x_i|\xi]$ because 

\begin{align*}
&\Pr_D[X_i=x_i|\xi]\\
&= \frac{\Pr_D[X_i=x_i,\xi]}{\Pr_D[\xi]}\\
&=  \frac{\Pr_D[X_i=x_i,\forall j\neq i, X_j\in\Sigma_i]}{\Pr_D[\forall j,  X_j\in\Sigma_j]}\\
&=  \frac{\Pr_D[X_i=x_i]\Pi_{j \neq i}\Pr_D[X_j \in \Sigma_j]}{\Pr_D[X_i\in \Sigma_i]\Pi_{j \neq i}\Pr_D[X_j \in \Sigma_j]}\\
&=  \frac{\Pr_D[X_i=x_i]}{\Pr_D[X_i \in \Sigma_i]}\Pi_{j \neq i}\frac{\Pr_D[X_j \in \Sigma_j]}{\Pr_D[X_j\in \Sigma_j]}\\
&=  \frac{\Pr_D[X_i=x_i]}{\Pr_D[X_i\in \Sigma_i]}\\
&=\Pr_D[X_i=x_i| X_i\in \Sigma_i] 
\end{align*}


Therefore, we have $\Pr[X_1=x_1,X_2=x_2,\cdots,X_n=x_n|\xi]=\Pi_i \Pr[X_i=x_i|\xi ]$. For all $(x_1,x_2,\cdots,x_n)\notin\Sigma_1\times\Sigma_2\times\cdots \Sigma_n$, we have $\Pr[X_1=x_1,X_2=x_2,\cdots,X_n=x_n|\xi]=0=\Pi_i \Pr[X_i=x_i|\xi]$.  Thus conditioning on $\xi$, $X_1,X_2,\cdots,X_n$ are independent as well.

In the Round Robin Protocol, after the first agent makes her declaration $h_1^1$ which is a function of the prior and $X_1$, we have a restriction for $X_1$. The second agent's declaration $h_2^1$ is a function of the prior, $h_1^1$ and $X_2$. For fixed $h_1^1$, we have an independent restrictions for $X_2$, and so forth. Thus, for any fixed history, we will have independent restrictions for $X_1$, $X_2$, ..., and $X_n$. Therefore, when $X_1$, $X_2$, ..., and $X_n$ are independent, they are still independent given any fixed history as well.

The same analysis shows that when $X_1$, $X_2$, ..., and $X_n$ are independent conditioning any $W=w$, they are still independent conditioning on both $W=w$ and any fixed history. 

\end{proof}

\section{Complete Agreement}\label{sec:completeagree}
\begin{observation}
For all priors where agents' private information are substitutes, in the standard consensus protocol, after one round, every agent's belief becomes $\Pr[W=1|X_1=x_1,X_2=x_2,\cdots, X_n =x_n]$. 
\end{observation}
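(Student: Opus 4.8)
The plan is to prove, by induction on $i$, the invariant that agent $i$'s round-$1$ announcement already aggregates the first $i$ agents' signals:
$$\mathbf{p}_i^1 = \Pr[W \mid X_1 = x_1, \dots, X_i = x_i],$$
and, as an immediate consequence, $\Pr[W \mid \mathbf{p}_1^1, \dots, \mathbf{p}_i^1] = \mathbf{p}_i^1$. The base case $i=1$ is just the definition $\mathbf{p}_1^1 = \Pr[W \mid X_1]$, and the ``consequence'' part follows from the first part by the tower property: each of $\mathbf{p}_1^1, \dots, \mathbf{p}_i^1$ is a deterministic function of $(X_1, \dots, X_i)$ (this is easy to see by a side induction on the protocol), so $\Pr[W \mid \mathbf{p}_1^1, \dots, \mathbf{p}_i^1] = \mathbb{E}[\Pr[W \mid X_1, \dots, X_i] \mid \mathbf{p}_1^1, \dots, \mathbf{p}_i^1] = \mathbb{E}[\mathbf{p}_i^1 \mid \mathbf{p}_1^1, \dots, \mathbf{p}_i^1] = \mathbf{p}_i^1$.

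The inductive step rests on the following substitution fact: if $A$ and $B$ are independent conditioning on $W$, and $g$ is a function of $B$ with $\Pr[W \mid g] = \Pr[W \mid B]$, then $\Pr[W \mid A, g] = \Pr[W \mid A, B]$. This is a one-line Bayes computation --- conditional independence factors $\Pr[A=a, B=b \mid W=w] = \Pr[A=a \mid W=w]\Pr[B=b \mid W=w]$, and the hypothesis on $g$ forces $\Pr[B=b \mid W=w]$ to be proportional in $w$ to $\Pr[g=g(b) \mid W=w]$, so the posterior's dependence on $b$ enters only through $g(b)$. I would apply this with $A = X_i$, $B = (X_1, \dots, X_{i-1})$, and $g = (\mathbf{p}_1^1, \dots, \mathbf{p}_{i-1}^1)$: the substitutes assumption gives $X_i \perp (X_1, \dots, X_{i-1}) \mid W$, and the inductive hypothesis gives that $g$ is a function of $(X_1, \dots, X_{i-1})$ with $\Pr[W \mid g] = \mathbf{p}_{i-1}^1 = \Pr[W \mid X_1, \dots, X_{i-1}]$. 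Hence $\mathbf{p}_i^1 = \Pr[W \mid X_i, g] = \Pr[W \mid X_1, \dots, X_i]$, closing the induction.

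Taking $i = n$ gives $\mathbf{p}_n^1 = \Pr[W \mid X_1, \dots, X_n]$ and $\Pr[W \mid H^1] = \Pr[W \mid \mathbf{p}_1^1, \dots, \mathbf{p}_n^1] = \Pr[W \mid X_1, \dots, X_n]$. To conclude that every agent holds this belief at the start of round $2$, note that agent $i$'s belief there is $\mathbf{p}_i^2 = \Pr[W \mid X_i, H^1]$; since $X_i$ and $H^1$ are both functions of $(X_1, \dots, X_n)$ and $\Pr[W \mid X_1, \dots, X_n]$ is already $\sigma(H^1)$-measurable, the tower property yields $\Pr[W \mid X_i, H^1] = \Pr[W \mid X_1, \dots, X_n]$. (An alternative finish uses Lemma~\ref{lemma:substitutes} together with Observation~\ref{obs}: the protocol achieves $\epsilon$-MI consensus with $\epsilon = 0$ after round $1$, and subadditivity then forces $I(X_1, \dots, X_n; W \mid H^1) = 0$, i.e. the history already carries all the information.)

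I expect the inductive step --- really, the substitution fact and the bookkeeping around it --- to be the main obstacle: one must verify that the partial history $(\mathbf{p}_1^1, \dots, \mathbf{p}_{i-1}^1)$, which in general is a strict coarsening of $(X_1, \dots, X_{i-1})$ (distinct signals can induce the same posterior), nevertheless retains all the information those signals have about $W$. This is precisely the place where the substitutes (conditional-independence) hypothesis is essential, and exactly the step that fails in the XOR false-consensus example.
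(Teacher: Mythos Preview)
Your proof is correct, but it takes a different route from the paper's. The paper argues directly in the likelihood-ratio space: writing $\ell = \Pr[W=1]/\Pr[W=0]$ and using the bijection $p \leftrightarrow \ell = p/(1-p)$, conditional independence makes agent $i$'s posterior likelihood equal to the previous agent's likelihood times the factor $\Pr[X_i=x_i\mid W=1]/\Pr[X_i=x_i\mid W=0]$. One Bayes computation shows this product is exactly the likelihood of $\Pr[W=1\mid X_1,\dots,X_i]$, so after agent $n$ the market price already encodes all signals.

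Your argument instead isolates a clean \emph{substitution lemma} (if $A\perp B\mid W$ and $g=g(B)$ satisfies $\Pr[W\mid g]=\Pr[W\mid B]$, then $\Pr[W\mid A,g]=\Pr[W\mid A,B]$) and drives the induction with it plus the tower property. This is more abstract: it never writes down the explicit update rule, but it makes transparent \emph{exactly} where the substitutes hypothesis enters, and it works verbatim for non-binary $W$ whereas the paper's likelihood-ratio calculation is tailored to the binary case. The paper's version, on the other hand, is more concrete and immediately exhibits the multiplicative form of the update, which some readers may find more illuminating. Both are short; yours is arguably the more reusable lemma.
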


The above observation shows that when agents' private information are substitutes, complete agreement is obtained after one round in the standard consensus protocol. 

\begin{proof}
We use $p^0$ to denote the prior $\Pr[W=1]$.  We use $\ell^0$ to denote the prior likelihood $\ell^0 = \frac{\Pr[W=1]}{\Pr[W=0]}$

The key observation is that there is a bijection $\ell_0 = \frac{p_0}{1-p_0}$ between the probability and likelihood space.  To see it is a bijection, note that $p_0 = \frac{\ell}{1 + \ell}$.  

In the standard consensus protocol, at round 1, agent 1 reports $p_1^1=\Pr[W=1|X_1=x_1]$.   But, because there is a bijection between the probability and likelihood space, they might just report the likelihood:
$$\ell_1^1=\frac{\Pr[W=1|X_1=x_1]}{\Pr[W=0|X_1=x_1]} = l_0 \cdot \frac{\Pr[X_1=x_1|W=1]}{\Pr[X_1=x_1|W=0]}$$

The second equality follows from applying Bayes rule to the numerator and denominator.  

Similarly, because the signals are conditionally independent:

\begin{align*} \ell_1^2 = & \frac{\Pr[W=1|X_1=x_1,X_2=x_2]}{\Pr[W=0|X_1=x_1,X_2=x_2]}\\
=& \frac{\Pr[W=1,X_1=x_1,X_2=x_2]}{\Pr[W=0,X_1=x_1,X_2=x_2]}\\ \tag{Conditional independence}
=& \frac{\Pr[W=1]\Pr[X_1=x_1|W=1]\Pr[X_2=x_2|W=1]}{\Pr[W=0]\Pr[X_1=x_1|W=0]\Pr[X_2=x_2|W=0]}\\
=& \frac{\Pr[W=1|X_1=x_1]\Pr[X_2=x_2|W=1]}{\Pr[W=0|X_1=x_1]\Pr[X_2=x_2|W=0]}\\
= &\ell_1^1 \cdot \frac{\Pr[X_2=x_2|W=1]}{\Pr[X_2=x_2|W=0]}.
\end{align*}

\yk{expand this part: done}


Analogously, for $i\geq 3$, each agent $i$ simply updates the likelihood by multiplying by $\frac{\Pr[X_i=x_i|W=1]}{\Pr[X_i=x_i|W=0]}$.  While the are actually reporting a probability, because of the bijection, it is equilivlent that they report a likelihood. 

Notice, that the likelihood at the end of one round, captures all the agent's information.  

\gs{Is this better?}\yk{Yes}

\end{proof}

\end{document}